\newtheorem{theorem}{Theorem}[section]
\newtheorem{proposition}[theorem]{Proposition}
\newtheorem{corollary}[theorem]{Corollary}
\newcommand{\rd}{{\rm d}}
\newcommand{\be}{\begin{equation}}
\newcommand{\ee}{\end{equation}}
\newcommand{\bey}{\begin{eqnarray}}
\newcommand{\eey}{\end{eqnarray}}
\newcommand{\eps}{\varepsilon}
\newcommand{\bx}{{\bf x}}
\newcommand{\ph}{\varphi}
\renewcommand{\a}{\alpha}
\newcommand{\cU}{{\cal U}}
\newcommand{\bR}{{\mathbb R}}
\newcommand{\bN}{{\mathbb N}}
\newcommand{\tr}{\mbox{Tr}}
\newcommand{\wh}{\widehat}
\newcommand{\const}{\mathrm{const}}
\newcommand{\cF}{{\cal F}}
\newcommand{\cE}{{\cal E}}
\newcommand{\cK}{{\cal K}}
\newcommand{\cH}{{\cal H}}
\newcommand{\cL}{{\cal L}}
\newcommand{\donothing}[1]{}
\begin{document}
\title{Dynamics of Bose-Einstein Condensates}
\author{Benjamin Schlein\\
\\
Department of Mathematics, University of California at Davis, CA
95616, USA}

\maketitle

\begin{abstract}
We report on some recent results concerning the dynamics of
Bose-Einstein condensates, obtained in a series of joint papers
\cite{ESY2,ESY3} with L. Erd\H os and H.-T. Yau. Starting from many
body quantum dynamics, we present a rigorous derivation of a cubic
nonlinear Schr\"odinger equation known as the Gross-Pitaevskii
equation for the time evolution of the condensate wave function.
\end{abstract}

\section{Introduction}
\setcounter{equation}{0}

Bosonic systems at very low temperature are characterized by the
fact that a  macroscopic fraction of the particles collapses into a
single one-particle state. Although this phenomenon, known as
Bose-Einstein condensation, was already predicted in the early days
of quantum mechanics, the first empirical evidence for its existence
was only obtained in 1995, in experiments performed by groups led by
Cornell and Wieman at the University of Colorado at Boulder and by
Ketterle at MIT (see \cite{CW,Kett}). In these important
experiments, atomic gases were initially trapped by magnetic fields
and cooled down at very low temperatures. Then the magnetic traps
were switched off and the consequent time evolution of the gas was
observed; for sufficiently small temperatures, the particles
remained close together and the gas moved as a single particle, a
clear sign for the existence of condensation.

\medskip

In the last years important progress has also been achieved in the
theoretical understanding of Bose-Einstein condensation. In
\cite{LSY}, Lieb, Yngvason, and Seiringer considered a trapped Bose
gas consisting of $N$ three-dimensional particles described by the
Hamiltonian
\begin{equation}\label{eq:ham1} H^{\text{trap}}_N
= \sum_{j=1}^N \left(-\Delta_j +
V_{\text{ext}} (x_j) \right) + \sum_{i<j}^N V_a (x_i -x_j),
\end{equation} where $V_{\text{ext}}$ is an external confining
potential and $V_a (x)$ is a repulsive interaction potential with
scattering length $a$ (here and in the rest of the paper we use the
notation $\nabla_j = \nabla_{x_j}$ and $\Delta_j = \Delta_{x_j}$).
Letting $N \to \infty$ and $a \to 0$ with $Na=a_0$ fixed, they
showed that the ground state energy $E (N)$ of (\ref{eq:ham1})
divided by the number of particle $N$ converges to
\[ \lim_{N \to \infty, \; Na = a_0}\frac{E(N)}{N} =
\min_{\ph \in L^2 (\bR^3): \; \| \ph\| =1} \cE_{\text{GP}} (\ph) \]
where $\cE_{\text{GP}}$ is the Gross-Pitaevskii energy functional
\begin{equation}\label{eq:GPfunc} \cE_{\text{GP}} (\ph) = \int \rd x \; \left( |\nabla \ph
(x)|^2 + V_{\text{ext}} (x) |\ph (x)|^2 + 4 \pi a_0 |\ph (x)|^4
\right) \, .\end{equation} Later, in \cite{LS}, Lieb and Seiringer
also proved that trapped Bose gases characterized by the
Gross-Pitaevskii scaling $Na = a_0 = \const$ exhibit Bose-Einstein
condensation in the ground state. More precisely, they showed that,
if $\psi_N$ is the ground state wave function of the Hamiltonian
(\ref{eq:ham1}) and if $\gamma^{(1)}_N$ denotes the corresponding
one-particle marginal (defined as the partial trace of the density
matrix $\gamma_N = |\psi_N \rangle \langle \psi_N |$ over the last
$N-1$ particles, with the convention that $\tr \; \gamma^{(1)}_N =
1$ for all $N$), then
\begin{equation}\label{eq:conden}
\gamma_N^{(1)} \to |\phi_{\text{GP}} \rangle \langle
\phi_{\text{GP}}| \qquad \text{as } N \to \infty \, .
\end{equation}
Here $\phi_{\text{GP}} \in L^2 (\bR^3)$ is the minimizer of the
Gross-Pitaevskii energy functional (\ref{eq:GPfunc}). The
interpretation of this result is straightforward; in the limit of
large $N$, all particles, apart from a fraction vanishing as $N \to
\infty$, are in the same one-particle state described by the
wave-function $\phi_{\text{GP}} \in L^2 (\bR^3)$. In this sense the
ground state of (\ref{eq:ham1}) exhibits complete Bose-Einstein
condensation into $\phi_{\text{GP}}$.

\medskip

In joint works with L. Erd\H os and H.-T. Yau (see
\cite{ESY2,ESY3,ESY4}), we prove that the Gross-Pitaevskii theory
can also be used to describe the dynamics of Bose-Einstein
condensates. In the Gross-Pitaevskii scaling (characterized by the
fact that the scattering length of the interaction potential is of
the order $1/N$) we show, under some conditions on the interaction
potential and on the initial $N$-particle wave function, that
complete Bose-Einstein condensation is preserved by the time
evolution. Moreover we prove that the dynamics of the condensate
wave function is governed by the time-dependent Gross-Pitaevskii
equation associated with the energy functional (\ref{eq:GPfunc}).

\medskip

As an example, consider the experimental set-up described above,
where the dynamics of an initially confined gas is observed after
removing the traps. Mathematically, the trapped gas can be described
by the Hamiltonian (\ref{eq:ham1}), where the confining potential
$V_{\text{ext}}$ models the magnetic traps. When cooled down at very
low temperatures, the system essentially relaxes to the ground state
$\psi_N$ of (\ref{eq:ham1}); from \cite{LS} it follows that at time
$t=0$, immediately before switching off the traps, the system
exhibits complete Bose-Einstein condensation into $\phi_{\text{GP}}$
in the sense (\ref{eq:conden}). At time $t=0$ the traps are turned
off, and one observes the evolution of the system generated by the
translation invariant Hamiltonian
\[ H_N = -\sum_{j=1}^N \Delta_j + \sum_{i<j}^N V_a (x_i -x_j) \, .\]
Our results (stated in more details in Section \ref{sec:main} below)
imply that, if $\psi_{N,t} = e^{-i H_N t} \psi_N$ is the time
evolution of the initial wave function $\psi_N$ and if
$\gamma_{N,t}^{(1)}$ denotes the one-particle marginal associated
with $\psi_{N,t}$, then, for any fixed time $t \in \bR$,
\[ \gamma^{(1)}_{N,t} \to |\ph_t \rangle \langle \ph_t| \qquad
\text{as } N \to \infty \] where $\ph_t$ is the solution of the
nonlinear time-dependent Gross-Pitaevskii equation
\begin{equation}\label{eq:GP1}
i \partial_t \ph_t = - \Delta \ph_t + 8 \pi a_0 |\ph_t|^2 \ph_t \,
\end{equation}
with the initial data $\ph_{t=0} = \phi_{\text{GP}}$. In other
words, we prove that at arbitrary time $t \in \bR$, the system still
exhibits complete condensation, and the time-evolution of the
condensate wave function is determined by the Gross-Pitaevskii
equation (\ref{eq:GP1}).

\medskip

The goal of this manuscript is to illustrate the main ideas of the
proof of the results obtained in \cite{ESY2,ESY3,ESY4}. The paper is
organized as follows. In Section~\ref{sec:heu} we define the model
more precisely, and we give a heuristic argument to explain the
emergence of the Gross-Pitaevskii equation (\ref{eq:GP1}). In
Section~\ref{sec:main} we present our main results. In
Section~\ref{sec:proof} we illustrate the general strategy used to
prove the main results and, finally, in Sections~\ref{sec:conv} and
\ref{sec:unique} we discuss the two most important parts of the
proof in some more details.

\section{Heuristic Derivation of the Gross-Pitaevskii
Equation}\label{sec:heu} \setcounter{equation}{0}

To describe the interaction among the particles we choose a
positive, spherical symmetric, compactly supported, smooth function
$V(x)$. We denote the scattering length of $V$ by $a_0$.

\medskip

Recall that the scattering length of $V$ is defined by the spherical
symmetric solution to the zero energy equation
\begin{equation}\label{eq:0en} \left( -\Delta + \frac{1}{2} V(x) \right) f(x) = 0 \qquad
f(x) \to 1 \quad \text{as } |x| \to \infty\,. \end{equation} The
scattering length of $V$ is defined then by \[ a_0 = \lim_{|x| \to
\infty} |x| - |x| f(x)\,. \] This limit can be proven to exist if
$V$ decays sufficiently fast at infinity. Note that, since we
assumed $V$ to have compact support, we have
\begin{equation}\label{eq:f} f(x) = 1 - \frac{a_0}{|x|} \end{equation}
for $|x|$ sufficiently large. Another equivalent characterization of
the scattering length is given by \begin{equation}\label{eq:8pia0} 8
\pi a_0 = \int \rd x \, V(x) f(x) \, .\end{equation}

To recover the Gross-Pitaevskii scaling, we define \( V_N (x) = N^2
V (Nx)\). By scaling it is clear that the scattering length of $V_N$
equals $a = a_0 /N$. In fact if $f(x)$ is the solution to
(\ref{eq:0en}), it is clear that $f_N (x) = f(Nx)$ solves
\begin{equation}\label{eq:0en-N}
\left(-\Delta + \frac{1}{2} V_N (x) \right) f_N (x) = 0
\end{equation}
with the boundary condition $f_N (x) \to 1$ as $|x| \to \infty$.
{F}rom (\ref{eq:f}), we obtain \[ f_N (x) = 1 - \frac{a_0}{N|x|} = 1
- \frac{a}{|x|}  \] for $|x|$ large enough. In particular the
scattering length $a$ of $V_N$ is given by $a=a_0/N$.

\bigskip

We consider the dynamics generated by the translation invariant
Hamiltonian
\begin{equation}\label{eq:ham}
H_N = \sum_{j=1}^N -\Delta_j + \sum_{i<j}^N V_N (x_i -x_j)\,
\end{equation}
acting on the Hilbert space $L_s^2 (\bR^{3N}, \rd x_1 \dots \rd
x_N)$, the bosonic subspace of $L^2 (\bR^{3N}, \rd x_1 \dots \rd
x_N)$ consisting of all permutation symmetric functions (although it
is possible to extend our analysis to include an external potential,
to keep the discussion as simple as possible we only consider the
translation invariant case (\ref{eq:ham})). We consider solutions
$\psi_{N,t}$ of the $N$-body Schr\"odinger equation
\begin{equation}\label{eq:schr}
i\partial_t \psi_{N,t} = H_N \psi_{N,t}\,.
\end{equation}
Let $\gamma_{N,t} = |\psi_{N,t} \rangle \langle \psi_{N,t}|$ denote
the density matrix associated with $\psi_{N,t}$, defined as the
orthogonal projection onto $\psi_{N,t}$. In order to study the limit
$N \to \infty$, we introduce the marginal densities of
$\gamma_{N,t}$. For $k=1, \dots,N$, we define the $k$-particle
density matrix $\gamma_{N,t}^{(k)}$ associated with $\psi_{N,t}$ by
taking the partial trace of $\gamma_{N,t}$ over the last $N-k$
particles. In other words, $\gamma_{N,t}^{(k)}$ is defined as the
positive trace class operator on $L_s^2 (\bR^{3k})$ with kernel
given by
\begin{equation}\label{eq:kpart}
\begin{split}
\gamma_{N,t}^{(k)} (\bx_k ; \bx'_k) = \int \rd \bx_{N-k} \;
\psi_{N,t} (\bx_k , \bx_{N-k})\overline{\psi}_{N,t}  (\bx'_k,
\bx_{N-k})\,.
\end{split}
\end{equation}
Here and in the rest of the paper we use the notation $\bx= (x_1,
x_2,\dots,x_N)$, $\bx_k = (x_1, x_2, \dots , x_k)$, $\bx'_k = (x'_1
, x'_2, \dots , x'_k)$, and $\bx_{N-k} = (x_{k+1}, x_{k+2}, \dots,
x_N)$.

\bigskip

We consider initial wave functions $\psi_{N,0}$ exhibiting complete
condensation in a one-particle state $\ph$. Thus at time $t=0$, we
assume that \begin{equation}\label{eq:gm1} \gamma^{(1)}_{N,0} \to
|\ph \rangle \langle \ph| \qquad \text{as } N \to \infty \,
.\end{equation} It turns out that the last equation immediately
implies that
\begin{equation}\label{eq:gmk} \gamma^{(k)}_{N,0} \to |\ph \rangle
\langle \ph|^{\otimes k} \qquad \text{as } N \to \infty
\end{equation} for every fixed $k \in \bN$
(the argument, due to Lieb and Seiringer, can be found in \cite{LS},
after Theorem~1). It is also interesting to notice that the
convergence (\ref{eq:gm1}) (and (\ref{eq:gmk})) in the trace class
norm is equivalent to the convergence in the weak* topology defined
on the space of trace class operators on $\bR^{3}$ (or $\bR^{3k}$,
for (\ref{eq:gmk})); we thank A. Michelangeli for pointing out this
fact to us (the proof is based on general arguments, such as
Gr\"umm's Convergence Theorem).

\medskip

Starting from the Schr\"odinger equation (\ref{eq:schr}) for the
wave function $\psi_{N,t}$, we can derive evolution equations for
the marginal densities $\gamma_{N,t}^{(k)}$. The dynamics of the
marginals is governed by a hierarchy of $N$ coupled equations
usually known as the BBGKY hierarchy.
\begin{equation}\label{eq:BBGKY}
\begin{split}
i\partial_t \gamma^{(k)}_{N,t} = \; &\sum_{j=1}^N \, \left[
-\Delta_j , \gamma^{(k)}_{N,t} \right] + \sum_{i<j}^k \, \left[ V_N
(x_i -x_j), \gamma^{(k)}_{N,t} \right] \\ &+ (N-k) \sum_{j=1}^k
\tr_{k+1} \; \left[ V_N (x_j -x_{k+1}), \gamma^{(k+1)}_{N,t}
\right]\,.
\end{split}
\end{equation}
Here $\tr_{k+1}$ denotes the partial trace over the $(k+1)$-th
particle.

\medskip

Next we study the limit $N \to \infty$ of the density
$\gamma^{(k)}_{N,t}$ for fixed $k \in \bN$. For simplicity we fix
$k=1$. {F}rom (\ref{eq:BBGKY}), the evolution equation for the
one-particle density matrix, written in terms of its kernel
$\gamma^{(1)}_{N,t} (x_1; x'_1)$ is given by
\begin{equation}\label{eq:BBGKY-1}
\begin{split}
i\partial_t \gamma^{(1)}_{N,t} (x_1, x'_1) = \; &\left(-\Delta_1 +
\Delta'_1 \right) \gamma^{(1)}_{N,t} (x_1 ; x'_1) \\ &+ (N-1) \int
\rd x_2 \, \left( V_N (x_1 -x_2) - V_N (x'_1 -x_2) \right)
\gamma^{(2)}_{N,t} (x_1, x_2;x'_1,x_2) \, .\end{split}
\end{equation}
Suppose now that $\gamma_{\infty,t}^{(1)}$ and
$\gamma^{(2)}_{\infty,t}$ are limit points (with respect to the
weak* topology) of $\gamma_{N,t}^{(1)}$ and, respectively,
$\gamma^{(2)}_{N,t}$ as $N \to \infty$. Since, formally,
\[ (N-1) V_N (x) = (N-1) N^2 V(Nx) \simeq N^3 V (Nx) \to b_0 \delta
(x) \qquad \text{with } b_0 = \int \rd x \, V(x)  \] as $N \to
\infty$, we could naively expect the limit points
$\gamma^{(1)}_{\infty,t}$ and $\gamma^{(2)}_{\infty,t}$ to satisfy
the limiting equation
\begin{equation}\label{eq:wrong}
i\partial_t \gamma^{(1)}_{\infty,t} (x_1;x'_1)= \left( -\Delta_1 +
\Delta'_1 \right) \gamma^{(1)}_{\infty,t} (x_1; x'_1) + b_0 \int \rd
x_2 \, \left( \delta (x_1 -x_2) - \delta (x'_1 -x_2) \right)
\gamma^{(2)}_{\infty,t} (x_1,x_2 ; x'_1,x_2)\,.
\end{equation}
{F}rom (\ref{eq:gmk}) we have, at time $t=0$, \begin{equation}
\begin{split}
\gamma^{(1)}_{\infty,0}  (x_1 ; x'_1) &= \ph (x_1) \overline{\ph}
(x'_1) \\
\gamma^{(2)}_{\infty,0}  (x_1, x_2 ; x'_1, x'_2) &= \ph (x_1) \ph
(x_2) \overline{\ph} (x'_1) \overline{\ph} (x'_2)\,.
\end{split}
\end{equation}
If condensation is really preserved by the time evolution, also at
time $t \neq 0$ we have
\begin{equation}\label{eq:condt}
\begin{split}
\gamma^{(1)}_{\infty,t} (x_1 ; x'_1) &= \ph_t (x_1) \overline{\ph}_t
(x'_1) \\
\gamma^{(2)}_{\infty,t} (x_1,x_2; x'_1,x'_2) &= \ph_t (x_1) \ph_t
(x_2) \overline{\ph}_t (x'_1)\overline{\ph}_t (x'_2)\,.
\end{split}
\end{equation}
Inserting (\ref{eq:condt}) in (\ref{eq:wrong}), we obtain the
self-consistent equation
\begin{equation}\label{eq:b0}
i\partial_t \ph_t = -\Delta \ph_t + b_0 |\ph_t|^2 \ph_t
\end{equation}
for the condensate wave function $\ph_t$. This equation has the same
form as the time-dependent Gross-Pitaevskii equation (\ref{eq:GP1}),
but a different coefficient in front of the nonlinearity ($b_0$
instead of $8\pi a_0$).

\medskip

The reason why we obtain the wrong coupling constant in
(\ref{eq:b0}) is that going from (\ref{eq:BBGKY-1}) to
(\ref{eq:wrong}), we took the two limits
\begin{equation}\label{eq:lims} (N-1) V_N (x) \to b_0 \delta (x)
\quad \text{ and } \quad  \gamma^{(2)}_{N,t} \to
\gamma^{(2)}_{\infty,t} \end{equation} independently from each
other. However, since the scattering length of the interaction is of
the order $1/N$, the two-particle density $\gamma^{(2)}_{N,t}$
develops a short scale correlation structure on the length scale
$1/N$, which is exactly the same length scale on which the potential
$V_N$ varies. For this reason the two limits in (\ref{eq:lims})
cannot be taken independently. In order to obtain the correct
Gross-Pitaevskii equation (\ref{eq:GP1}) we need to take into
account the correlations among the particles, and the short scale
structure they create in the marginal density $\gamma_{N,t}^{(2)}$.

\medskip

To describe the correlations among the particles we make use of the
solution $f_N (x)$ to the zero energy scattering equation
(\ref{eq:0en-N}). Assuming that the function $f_N (x_i -x_j)$ gives
a good approximation for the correlations between particles $i$ and
$j$, we may expect that the one- and two-particle densities
associated with the evolution of a condensate are given, for large
but finite $N$, by
\begin{equation}\label{eq:gam12}
\begin{split}
\gamma^{(1)}_{N,t} (x_1; x'_1) &\simeq \ph_t (x_1) \overline{\ph}_t (x'_1)\\
\gamma^{(2)}_{N,t} (x_1, x_2; x'_1, x'_2) &\simeq f_N (x_1 -x_2) f_N
(x'_1 - x'_2) \ph_t (x_1) \ph_t (x_2) \overline{\ph}_t (x'_1)
\overline{\ph}_t (x'_2)\,.
\end{split}
\end{equation}
Inserting this ansatz into (\ref{eq:BBGKY-1}), we obtain a new
self-consistent equation \begin{equation}\label{eq:GP2}
\begin{split} i\partial_t \ph_t &= -\Delta \ph_t + \left(
\lim_{N\to\infty} (N-1) \int \rd x f_N (x) V_N (x) \right) |\ph_t|^2
\ph_t \\ &= -\Delta \ph_t + \left( \lim_{N\to\infty} N^3 \int \rd x
f (Nx) V (Nx) \right) |\ph_t|^2 \ph_t \\ &= -\Delta \ph_t + 8\pi a_0
|\ph_t|^2 \ph_t
\end{split}
\end{equation}
because of (\ref{eq:8pia0}). This is exactly the Gross-Pitaevskii
equation (\ref{eq:GP1}), with the correct coupling constant in front
of the nonlinearity.

\medskip

Note that the presence of the correlation functions $f_N (x_1 -x_2)$
and $f_N (x'_1 -x'_2)$ in (\ref{eq:gam12}) does not contradict
complete condensation of the system at time $t$. On the contrary, in
the weak limit $N \to \infty$, the function $f_N$ converges to one,
and therefore $\gamma^{(1)}_{N,t}$ and $\gamma^{(2)}_{N,t}$ converge
to $|\ph_t \rangle \langle \ph_t|$ and $|\ph_t \rangle \langle
\ph_t|^{\otimes 2}$, respectively. The correlations described by the
function $f_N$ can only produce nontrivial effects on the
macroscopic dynamics of the system because of the singularity of the
interaction potential $V_N$.

\medskip

{F}rom this heuristic argument it is clear that, in order to obtain
a rigorous derivation of the Gross-Pitaevskii equation
(\ref{eq:GP2}), we need to identify the short scale structure of the
marginal densities and prove that, in a very good approximation, it
can be described by the function $f_N$ as in (\ref{eq:gam12}). In
other words, we need to show a very strong separation of scales in
the marginal density $\gamma^{(2)}_{N,t}$ (and, more generally, in
the $k$-particle density $\gamma^{(k)}_{N,t}$) associated with the
solution of the $N$-body Schr\"odinger equation; the
Gross-Pitaevskii theory can only be correct if $\gamma^{(k)}_{N,t}$
has a regular part, which factorizes for large $N$ into the product
of $k$ copies of the orthogonal projection $|\ph_t \rangle \langle
\ph_t|$, and a time independent singular part, due to the
correlations among the particles, and described by products of the
functions $f_N (x_i -x_j)$, $1 \leq i,j \leq k$.

\section{Main Results}\label{sec:main}
\setcounter{equation}{0}

To prove our main results we need to assume the interaction
potential to be sufficiently weak. To measure the strength of the
potential, we introduce the dimensionless quantity
\begin{equation}\label{eq:alpha}
\alpha = \sup_{x \in \bR^3} |x|^2 V(x) + \int \frac{\rd x}{|x|} \,
V(x) \,.
\end{equation}
Apart from the smallness assumption on the potential, we also need
to assume that the correlations characterizing the initial
$N$-particle wave function are sufficiently weak. We define
therefore the notion of {\it asymptotically factorized} wave
functions. We say that a family of permutation symmetric wave
functions $\psi_N$ is asymptotically factorized if there exists $\ph
\in L^2 (\bR^3)$ and, for any fixed $k \geq 1$, there exists a
family $\xi^{(N-k)}_N \in L_s^2 (\bR^{3(N-k)})$ such that
\begin{equation}\label{eq:asy} \left\| \, \psi_N - \ph^{\otimes k} \otimes
\xi^{(N-k)}_N \right\| \to 0 \qquad \text{as } N \to \infty \, .
\end{equation} It is simple to check that, if $\psi_N$ is
asymptotically factorized, then it exhibits complete Bose-Einstein
condensation in the one-particle state $\ph$ (in the sense that the
one-particle density associated with $\psi_N$ satisfy
$\gamma^{(1)}_N \to |\ph \rangle \langle \ph|$ as $N \to \infty$).
Asymptotic factorization is therefore a stronger condition than
complete condensation, and it provides more control on the
correlations of $\psi_N$.

\begin{theorem}\label{thm:main}
Assume that $V(x)$ is a positive, smooth, spherical symmetric, and
compactly supported potential such that $\alpha$ (defined in
(\ref{eq:alpha})) is sufficiently small. Consider an asymptotically
factorized family of wave functions $\psi_N \in L^2_s (\bR^{3N})$,
exhibiting complete Bose-Einstein condensation in a one-particle
state $\ph \in H^1 (\bR^3)$, in the sense that
\begin{equation}\label{eq:convt0} \gamma^{(1)}_{N} \to |\ph \rangle
\langle \ph| \qquad \text{as } N\to \infty
\end{equation} where $\gamma^{(1)}_N$ denotes the one-particle
density associated with $\psi_N$. Then, for any fixed $t \in \bR$,
the one-particle density $\gamma^{(1)}_{N,t}$ associated with the
solution $\psi_{N,t}$ of the $N$-particle Schr\"odinger equation
(\ref{eq:schr}) satisfies
\begin{equation}\label{eq:converg} \gamma^{(1)}_{N,t} \to |\ph_t \rangle \langle \ph_t| \qquad
\text{as } N \to \infty \end{equation} where $\ph_t$ is the solution
to the time-dependent Gross-Pitaevskii equation
\begin{equation}\label{eq:GP} i\partial_t \ph_t = -\Delta \ph_t + 8
\pi a_0 |\ph_t|^2 \ph_t \end{equation} with initial data $\ph_{t=0}
= \ph$.
\end{theorem}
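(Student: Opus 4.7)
The plan is to follow the BBGKY hierarchy scheme. Setting $\Gamma_{N,t} = (\gamma^{(k)}_{N,t})_{k \geq 1}$, I would first establish compactness of the sequence $\{\Gamma_{N,t}\}_N$ in a suitable topology, extract a limit point $\Gamma_{\infty,t} = (\gamma^{(k)}_{\infty,t})_{k \geq 1}$, verify that it satisfies an infinite hierarchy of equations (the Gross-Pitaevskii hierarchy) with factorized initial data $\gamma^{(k)}_{\infty,0} = |\ph \rangle \langle \ph|^{\otimes k}$ (this initial datum follows from (\ref{eq:convt0}) via the argument after Theorem~1 of \cite{LS}), and finally prove uniqueness of solutions to this hierarchy. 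Since $\gamma^{(k)}_{\infty,t} = |\ph_t \rangle \langle \ph_t|^{\otimes k}$ with $\ph_t$ solving (\ref{eq:GP}) is a solution, uniqueness forces the entire sequence to converge to it, proving (\ref{eq:converg}).

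The first technical ingredient is an a priori energy estimate of the form
\[ \langle \psi_{N,t}, (H_N+N)^k \psi_{N,t} \rangle \geq C^k N^k \langle \psi_{N,t}, (1-\Delta_1) \cdots (1-\Delta_k) \psi_{N,t} \rangle, \]
valid uniformly in $N$ provided the smallness parameter $\alpha$ from (\ref{eq:alpha}) is small enough to run a Hardy-type bootstrap controlling mixed derivatives of $V_N$. Because the left-hand side is conserved by the dynamics and bounded at $t=0$ (finiteness of the initial energy per particle follows from asymptotic factorization and $\ph \in H^1(\bR^3)$), these bounds translate into uniform Sobolev-type estimates on $\gamma^{(k)}_{N,t}$. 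Standard arguments (weak-* compactness on trace class operators, plus equicontinuity in $t$ extracted directly from (\ref{eq:BBGKY})) then give compactness and the existence of limit points.

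The delicate point in passing to the limit in the BBGKY equation (\ref{eq:BBGKY-1}) is the interaction term: as explained in Section~\ref{sec:heu}, naively one is led to (\ref{eq:wrong}) with the wrong constant $b_0 = \int V$. To recover $8\pi a_0$ I would make rigorous the short-scale ansatz (\ref{eq:gam12}): using the energy estimate, one shows that $\gamma^{(2)}_{N,t}(x_1,x_2;x_1',x_2')$ splits, up to errors vanishing as $N \to \infty$, into the product $f_N(x_1-x_2)f_N(x_1'-x_2')\,\wt\gamma^{(2)}_{N,t}$ where $\wt\gamma^{(2)}_{N,t}$ is regular on the scale $1/N$. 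Then $(N-1)V_N(x_1-x_2)\gamma^{(2)}_{N,t}$ converges weakly to $8\pi a_0\, \delta(x_1-x_2)\gamma^{(2)}_{\infty,t}$ by (\ref{eq:8pia0}). An analogous analysis of $\gamma^{(k+1)}_{N,t}$ shows that any limit point satisfies the Gross-Pitaevskii hierarchy
\[ i\partial_t \gamma^{(k)}_{\infty,t} = \sum_{j=1}^k [-\Delta_j, \gamma^{(k)}_{\infty,t}] + 8\pi a_0 \sum_{j=1}^k \tr_{k+1}\left[\delta(x_j - x_{k+1}), \gamma^{(k+1)}_{\infty,t}\right]. \]

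The main obstacle will be the uniqueness step: standard contraction arguments in trace norm fail because the limiting interaction has collapsed to a delta function. I would iterate the Duhamel formula, writing $\gamma^{(k)}_{\infty,t}$ as a sum of time-ordered multiple integrals of free evolutions alternated with delta-type vertices applied to $\gamma^{(k+n)}_{\infty,t}$, and organize the terms along a combinatorial tree structure. Each individual graph is then bounded using the smoothing properties of the free Schr\"odinger propagator, together with the Sobolev control on the higher marginals inherited from the energy bound. The core difficulty is to arrange the bookkeeping so that the per-graph estimate beats the factorial growth in the number of trees at level $n$, so that the Duhamel expansion converges and differences of two solutions with the same initial data vanish. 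Once uniqueness is in hand, the factorized product solution is the only possible limit of any subsequence, and (\ref{eq:converg}) follows for the full sequence.
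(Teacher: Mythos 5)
Your plan follows exactly the three-step scheme of the paper (compactness of marginals, convergence of the BBGKY hierarchy to the Gross--Pitaevskii hierarchy with the correct coupling $8\pi a_0$ via the $f_N$ renormalization, then uniqueness via a Feynman-graph reorganization of the Duhamel series). However, the specific a priori energy estimate you propose is false, and this is not a cosmetic slip -- it is precisely the difficulty the paper is built around. You write
\[
\langle \psi_{N,t}, (H_N+N)^k \psi_{N,t} \rangle \geq C^k N^k \langle \psi_{N,t}, (1-\Delta_1) \cdots (1-\Delta_k) \psi_{N,t} \rangle,
\]
and then use conservation of the left side to deduce uniform bounds $\tr\,(1-\Delta_1)\cdots(1-\Delta_k)\gamma^{(k)}_{N,t}\leq C^k$. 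But these bounds cannot hold uniformly in $N$ for $k\geq 2$: the solution $\psi_{N,t}$ carries the singular correlation factor $f_N(x_i-x_j)$, whose second mixed derivative is not square integrable on scale $1/N$, so $\int|\nabla_i\nabla_j\psi_{N,t}|^2 \sim N$ rather than $O(1)$. In other words, the inequality as written would force a conclusion the paper explicitly shows to be wrong. The correct statements require a renormalization. For $k=2$ one divides by $f_N$ first and proves $\langle\psi,H_N^2\psi\rangle\geq CN^2\int|\nabla_i\nabla_j(\psi/f_N(x_i-x_j))|^2$ (the paper's Proposition~\ref{prop:enest}); this is what is actually used to pass to the limit in the BBGKY interaction term and produce $8\pi a_0$. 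For general $k$ one inserts smooth cutoffs $\theta_1,\dots,\theta_{k-1}$ that vanish when another particle approaches $x_j$ within a mesoscopic scale $\ell$ with $N^{-1}\ll\ell\ll N^{-1/3}$ (Proposition~\ref{prop:highen}), and only then shows that the cutoffs become irrelevant in the limit $N\to\infty$ so that the bound $\tr\,(1-\Delta_1)\cdots(1-\Delta_k)\gamma^{(k)}_{\infty,t}\leq C^k$ holds for the \emph{limit points}, not for the finite-$N$ marginals. Your uniqueness argument would then be run against the limiting hierarchy with this a priori bound on $\gamma^{(k)}_{\infty,t}$, not against the finite-$N$ objects.

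Two smaller points. First, the bound $\langle\psi_N,H_N^k\psi_N\rangle=O(N^k)$ at $t=0$ does not follow directly from asymptotic factorization and $\ph\in H^1$; one needs a regularization/approximation of the initial datum (the paper does this explicitly) because $H_N^k$ is a very singular operator and a generic $H^1$ product state need not be in its form domain with the right scaling. Second, in the uniqueness step, the difficulty is not ``beating factorial growth in the number of trees'' -- the reorganization into Feynman graphs brings the number of diagrams at order $m$ down to $O(2^{4m+k})$, i.e.\ exponential, not factorial. The real price is that the $1/m!$ gain from the ordered time simplex is lost in the regrouping, and the time integrations must then be spent on smoothing the delta vertices rather than producing the small factor; the per-graph bound must be $C^{k+m}t^{m/4}$ with a power of $t$ strictly positive, so that the series converges for short times, and uniqueness is then bootstrapped using the time-uniform a priori bounds.
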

The convergence in (\ref{eq:convt0}) and (\ref{eq:converg}) is in
the trace norm topology (which in this case is equivalent to the
weak* topology defined on the space of trace class operators on
$\bR^3$). Moreover, from (\ref{eq:converg}) we also get convergence
of higher marginal. For every $k \geq 1$, we have
\[ \gamma^{(k)}_{N,t} \to |\ph_t \rangle \langle \ph_t|^{\otimes k}
\, \quad \text{as } N \to \infty. \]

\bigskip

Theorem \ref{thm:main} can be used to describe the dynamics of
condensates satisfying the condition of asymptotic factorization.
The following two corollaries provide examples of such initial data.

\medskip

The simplest example of $N$-particle wave function satisfying the
assumption of asymptotic factorization is given by a product state.

\begin{corollary}\label{cor:product}
Under the assumptions on $V(x)$ stated in Theorem \ref{thm:main},
let $\psi_N (\bx) = \prod_{j=1}^N \ph (x_j)$ for an arbitrary $\ph
\in H^1 (\bR^3)$. Then, for any $t \in \bR$,
\[ \gamma^{(1)}_{N,t} \to |\ph_t \rangle \langle \ph_t| \quad
\text{as } N \to \infty \] where $\ph_t$ is a solution of the
Gross-Pitaevskii equation (\ref{eq:GP}) with initial data $\ph_{t=0}
= \ph$.
\end{corollary}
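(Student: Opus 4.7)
The plan is to verify that the pure product state $\psi_N = \ph^{\otimes N}$ satisfies all hypotheses of Theorem~\ref{thm:main} and then simply invoke that theorem. There is no genuine obstacle here: the corollary is the simplest specialization of the main theorem, and essentially all work consists in checking that the two abstract conditions on $\psi_N$ (asymptotic factorization and complete condensation at $t=0$) hold trivially.

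First I would verify asymptotic factorization. For any fixed $k \geq 1$, take $\xi_N^{(N-k)} := \ph^{\otimes(N-k)} \in L^2_s(\bR^{3(N-k)})$. Then
\[ \psi_N - \ph^{\otimes k} \otimes \xi_N^{(N-k)} = \ph^{\otimes N} - \ph^{\otimes k} \otimes \ph^{\otimes (N-k)} = 0, \]
so the norm in (\ref{eq:asy}) is identically zero for every $N$ (not merely vanishing in the limit). Permutation symmetry of $\psi_N$ is automatic. Here it is understood that $\ph$ is normalized, $\|\ph\|_{L^2} = 1$; otherwise one first rescales, noting the corollary is only meaningful for unit vectors since $\|\psi_N\|=\|\ph\|^N$.

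Next I would verify the initial condensation condition (\ref{eq:convt0}). A direct computation of the one-particle marginal of $\ph^{\otimes N}$ gives
\[ \gamma_N^{(1)}(x_1;x_1') = \int \rd x_2\cdots \rd x_N \; \ph(x_1)\overline{\ph}(x_1')\prod_{j=2}^{N} |\ph(x_j)|^2 = \ph(x_1)\,\overline{\ph}(x_1'), \]
so $\gamma_N^{(1)} = |\ph\rangle\langle\ph|$ exactly, for every $N$. Trace-norm convergence in (\ref{eq:convt0}) is then trivial.

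Finally, the regularity hypothesis $\ph \in H^1(\bR^3)$ is assumed in the corollary, and the assumptions on $V(x)$ (positive, smooth, spherically symmetric, compactly supported, with $\alpha$ small) are taken verbatim from Theorem~\ref{thm:main}. All hypotheses are therefore in force, and Theorem~\ref{thm:main} yields $\gamma^{(1)}_{N,t} \to |\ph_t\rangle\langle\ph_t|$ as $N\to\infty$, with $\ph_t$ solving (\ref{eq:GP}) with initial datum $\ph_{t=0}=\ph$. The entire analytic content---in particular the emergence of the short-scale correlation structure described by $f_N$ during the evolution, despite its absence in $\psi_{N,0}$---is already absorbed into the statement of Theorem~\ref{thm:main}.
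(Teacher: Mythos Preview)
Your proof is correct and matches the paper's approach: the paper presents Corollary~\ref{cor:product} as an immediate specialization of Theorem~\ref{thm:main}, remarking only that product states are ``the simplest example of $N$-particle wave function satisfying the assumption of asymptotic factorization,'' without spelling out the (trivial) verifications you have written out. Your explicit checks that $\psi_N - \ph^{\otimes k}\otimes \ph^{\otimes(N-k)}=0$ and $\gamma_N^{(1)}=|\ph\rangle\langle\ph|$ exactly are precisely what the paper leaves implicit.
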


The second application of Theorem \ref{thm:main} gives a
mathematical description of the results of the experiments depicted
in the introduction.

Let \begin{equation}\label{eq:trapped} H_N^{\text{trap}} =
\sum_{j=1}^N \left(-\Delta_j + V_{\text{ext}} (x_j) \right) +
\sum_{i<j}^N V_N (x_i -x_j) \end{equation} with a confining
potential $V_{\text{ext}}$. Let $\psi_N$ be the ground state of
$H_N^{\text{trap}}$. By \cite{LS}, $\psi_N$ exhibits complete Bose
Einstein condensation into the minimizer $\phi_{\text{GP}}$ of the
Gross-Pitaevskii energy functional $\cE_{\text{GP}}$ defined in
(\ref{eq:GPfunc}). In other words
\[ \gamma^{(1)}_N \to |\phi_{\text{GP}} \rangle \langle
\phi_{\text{GP}} | \qquad \text{as  } N \to \infty \, .\] In
\cite{ESY2}, we demonstrate that $\psi_N$ also satisfies the
condition (\ref{eq:asy}) of asymptotic factorization. {F}rom this
observation, we obtain the following corollary.

\begin{corollary}\label{cor:trap}
Under the assumptions on $V(x)$ stated in Theorem \ref{thm:main},
let $\psi_N$ be the ground state of (\ref{eq:trapped}), and denote
by $\gamma_{N,t}^{(1)}$ the one-particle density associated with the
solution $\psi_{N,t}= e^{-iH_N t} \psi_N$ of the Schr\"odinger
equation (\ref{eq:schr}). Then, for any fixed $t \in \bR$,
\[ \gamma^{(1)}_{N,t} \to |\ph_t \rangle \langle \ph_t| \qquad \text{as } N \to \infty \]
where $\ph_t$ is the solution of the Gross-Pitaevskii equation
(\ref{eq:GP}) with initial data $\ph_{t=0} = \phi_{\text{GP}}$.
\end{corollary}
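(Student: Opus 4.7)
The plan is to deduce Corollary \ref{cor:trap} from Theorem \ref{thm:main} applied with the initial condensate wave function $\ph = \phi_{\text{GP}}$. The hypothesis on $V$ is identical in the two statements, so only two additional ingredients must be verified: complete Bose-Einstein condensation of the trapped ground state $\psi_N$ into a one-particle state lying in $H^1(\bR^3)$, and asymptotic factorization of $\psi_N$ in the sense of (\ref{eq:asy}).

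First I would record that $\phi_{\text{GP}} \in H^1(\bR^3)$: since $\phi_{\text{GP}}$ minimizes $\cE_{\text{GP}}$ subject to $\| \ph \|_2 = 1$ and the functional contains the explicit Dirichlet term $\int |\nabla \ph|^2$, any finite-energy competitor is automatically in $H^1(\bR^3)$, and the existence of such a minimizer under standard assumptions on $V_{\text{ext}}$ is classical. Complete condensation of $\psi_N$ into $\phi_{\text{GP}}$, namely $\gamma_N^{(1)} \to |\phi_{\text{GP}} \rangle \langle \phi_{\text{GP}}|$ in trace norm, is exactly the theorem of Lieb and Seiringer \cite{LS} recalled in the introduction of the present paper.

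The key step is asymptotic factorization. This is precisely the content of a result proved in \cite{ESY2}: for the trapped ground state $\psi_N$, and for each fixed $k \geq 1$, there exists a symmetric $\xi_N^{(N-k)} \in L_s^2(\bR^{3(N-k)})$ with $\| \psi_N - \phi_{\text{GP}}^{\otimes k} \otimes \xi_N^{(N-k)} \| \to 0$ as $N \to \infty$. Heuristically, the argument combines the complete condensation of Lieb-Seiringer with the quantitative energy estimates of Lieb-Seiringer-Yngvason to show that extracting $k$ factors of $\phi_{\text{GP}}$ costs only a vanishing $L^2$-error, and not merely vanishing in the trace norm of finite marginals; the short-scale correlation structure carried by the functions $f_N$ discussed in Section~\ref{sec:heu} is absorbed into $\xi_N^{(N-k)}$ and is invisible at the level of asymptotic factorization. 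This is the step I expect to be genuinely delicate, since it goes strictly beyond trace-norm control of marginals, and for the purposes of the corollary it is invoked as a black box.

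Once these two facts are in hand, all hypotheses of Theorem \ref{thm:main} are satisfied with $\ph = \phi_{\text{GP}}$, and the conclusion $\gamma_{N,t}^{(1)} \to |\ph_t \rangle \langle \ph_t|$ with $\ph_t$ the solution of (\ref{eq:GP}) with initial datum $\ph_{t=0} = \phi_{\text{GP}}$ follows immediately. Note that $V_{\text{ext}}$ enters the corollary only through the preparation of the initial state at $t=0$; after the traps are switched off the evolution is generated by the translation invariant $H_N$ of (\ref{eq:ham}), which is precisely the setting of Theorem \ref{thm:main}.
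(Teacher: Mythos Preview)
Your proposal is correct and follows exactly the route the paper takes: complete condensation of the trapped ground state into $\phi_{\text{GP}}$ is quoted from \cite{LS}, asymptotic factorization of $\psi_N$ is quoted as a black box from \cite{ESY2}, and Theorem~\ref{thm:main} is then applied with $\ph=\phi_{\text{GP}}$. The additional remarks you make (that $\phi_{\text{GP}}\in H^1(\bR^3)$ and that $V_{\text{ext}}$ plays no role after $t=0$) are accurate and harmless elaborations of what the paper leaves implicit.
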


Although the second corollary describes physically more realistic
situations, also the first corollary has interesting consequences.
In Section \ref{sec:heu}, we observed that the emergence of the
scattering length in the Gross-Pitaevskii equation is an effect due
to the correlations. The fact that the Gross-Pitaevskii equation
describes the dynamics of the condensate also if the initial wave
function is completely uncorrelated, as in Corollary
\ref{cor:product}, implies that the $N$-body Schr\"odinger dynamics
generates the singular correlation structure in very short times. Of
course, when the wave function develops correlations on the length
scale $1/N$, the energy associated with this length scale decreases;
since the total energy is conserved by the Schr\"odinger evolution,
we must conclude that together with the short scale structure at
scales of order $1/N$, the $N$-body dynamics also produces
oscillations on intermediate length scales $1/N \ll \ell \ll 1$,
which carry the excess energy (the difference between the energy of
the factorized wave function and the energy of the wave function
with correlations on the length scale $1/N$) and which have no
effect on the macroscopic dynamics (because only variations of the
wave function on length scales of order one and order $1/N$ affect
the macroscopic dynamics described by the Gross-Pitaevskii
equation).

\section{General Strategy of the Proof and Previous Results}
\label{sec:proof}\setcounter{equation}{0}

In this section we illustrate the strategy used to prove Theorem
\ref{thm:main}. The proof is divided into three main steps.

\medskip

{\it Step 1. Compactness of $\gamma_{N,t}^{(k)}$.} Recall, from
(\ref{eq:kpart}), the definition of the marginal densities
$\gamma^{(k)}_{N,t}$ associated with the solution $\psi_{N,t} = \exp
(-iH_N t) \psi_N$ of the $N$-body Schr\"odinger equation. By
definition, for any $N\in \bN$ and $t\in \bR$, $\gamma^{(k)}_{N,t}$
is a positive operator in $\cL^1_k = \cL^1 (L^2(\bR^{3k}))$ (the
space of trace class operators on $L^2 (\bR^{3k})$) with trace equal
to one. For fixed $t \in \bR$ and $k \geq 1$, it follows by standard
general argument (Banach-Alaouglu Theorem) that the sequence
$\{\gamma^{(k)}_{N,t} \}_{N \geq k}$ is compact with respect to the
weak* topology of $\cL^1_k$. Note here that $\cL^1_k$ has a weak*
topology because $\cL^1_k = \cK_k^*$, where $\cK_k = \cK (L^2
(\bR^{3k}))$ is the space of compact operators on $L^2 (\bR^{3k})$.
To make sure that we can find subsequences of $\gamma_{N,t}^{(k)}$
which converge for all times in a certain interval, we fix $T > 0$
and consider the space $C([0,T] , \cL^1_k)$ of all functions of $t
\in [0,T]$ with values in $\cL_k^1$ which are continuous with
respect to the weak* topology on $\cL_k^1$. Since $\cK_k$ is
separable, it follows that the weak* topology on the unit ball of
$\cL^1_k$ is metrizable; this allows us to prove the equicontinuity
of the densities $\gamma_{N,t}^{(k)}$, and to obtain compactness of
the sequences $\{\gamma^{(k)}_{N,t}\}_{N\geq k}$ in $C([0,T],
\cL_k^1)$.

\bigskip

{\it Step 2. Convergence to an infinite hierarchy.} By Step 1 we
know that, as $N \to \infty$, the family of marginal densities
$\Gamma_{N,t} = \{ \gamma_{N,t}^{(k)} \}_{k=1}^N$ has at least one
limit point $\Gamma_{\infty,t} = \{ \gamma^{(k)}_{\infty,t} \}_{k
\geq 1}$ in $\bigoplus_{k \geq 1} C([0,T], \cL^1_k)$ with respect to
the product topology. Next, we derive evolution equations for the
limiting densities $\gamma^{(k)}_{\infty,t}$. Starting from the
BBGKY hierarchy (\ref{eq:BBGKY}) for the family $\Gamma_{N,t}$, we
prove that any limit point $\Gamma_{\infty,t}$ satisfies the
infinite hierarchy of equations
\begin{equation}\label{eq:GPhier}
i \partial_t \gamma^{(k)}_{\infty,t} = \sum_{j=1}^k \left[ -\Delta_j
, \gamma^{(k)}_{\infty,t} \right] + 8 \pi a_0 \sum_{j=1}^k \tr_{k+1}
\; \left[ \delta (x_j -x_{k+1}), \gamma^{(k+1)}_{\infty,t} \right]\,
\end{equation}
for $k \geq 1$. It is at this point, in the derivation of this
infinite hierarchy, that we need to identify the singular part of
the densities $\gamma_{N,t}^{(k+1)}$. The emergence of the
scattering length in the second term on the right hand side of
(\ref{eq:GPhier}) is due to short scale structure of
$\gamma_{N,t}^{(k+1)}$.

\medskip

It is worth noticing that the infinite hierarchy (\ref{eq:GPhier})
has a factorized solution. In fact, it is simple to see that the
infinite family \begin{equation}\label{eq:fact}
 \gamma^{(k)}_t =
|\ph_t \rangle \langle \ph_t|^{\otimes k} \qquad \text{for } k \geq
1
\end{equation} solves (\ref{eq:GPhier}) if and only if $\ph_t$ is
a solution to the Gross-Pitaevskii equation (\ref{eq:GP}).

\bigskip

{\it Step 3. Uniqueness of the solution to the infinite hierarchy.}
To conclude the proof of Theorem~\ref{thm:main}, we show that the
infinite hierarchy (\ref{eq:GPhier}) has a unique solution. This
implies immediately that the densities $\gamma_{N,t}^{(k)}$
converge; in fact, a compact sequence with at most one limit point
is always convergent. Moreover, since we know that the factorized
densities (\ref{eq:fact}) are a solution, it also follows that, for
any $k \geq 1$, \[ \gamma^{(k)}_{N,t} \to |\ph_t \rangle \langle
\ph_t|^{\otimes k} \qquad \text{as } N \to \infty \] with respect to
the weak* topology of $\cL^1_k$.

\bigskip

Similar strategies have been used to obtain rigorous derivations of
the nonlinear Hartree equation \begin{equation}\label{eq:hartree}
i\partial_t \ph_t = -\Delta \ph_t + (v * |\ph_t|^2 ) \ph_t
\end{equation} for the dynamics of initially factorized wave functions
in bosonic many particle mean field models, characterized by the
Hamiltonian \begin{equation}\label{eq:mf} H^{\text{mf}}_N =
\sum_{j=1}^N -\Delta_j + \frac{1}{N} \sum_{i<j}^N  v (x_i -x_j) \,.
\end{equation}

\medskip

In this context, the approach outlined above was introduced by Spohn
in \cite{Sp}, who applied it to derive (\ref{eq:hartree}) in the
case of a bounded potential $v$. In \cite{EY}, Erd\H os and Yau
extended Spohn's result to the case of a Coulomb interaction $v(x) =
\pm 1 / |x|$ (partial results for the Coulomb case, in particular
the convergence to the infinite hierarchy, were also obtained by
Bardos, Golse, and Mauser, see \cite{BGM}). More recently, Adami,
Golse, and Teta used the same approach in \cite{AGT} for
one-dimensional systems with dynamics generated by a Hamiltonian of
the form (\ref{eq:mf}) with an $N$-dependent pair potential $v_N (x)
= N^{\beta} V(N^{\beta} x)$, $\beta <1$. In the limit $N \to
\infty$, they obtain the nonlinear Schr\"odinger equation \[
i\partial_t \ph_t = -\Delta \ph_t + b_0 |\ph_t|^2 \ph_t \qquad
\text{with } b_0 = \int V(x) \rd x  \, .\]

\medskip

Notice that the Hamiltonian (\ref{eq:ham}) has the same form as the
mean field Hamiltonian (\ref{eq:mf}), with an $N$-dependent pair
potential $v_N (x) = N^3 V (Nx)$. Of course, one may also ask what
happens if we consider the mean field Hamiltonian (\ref{eq:mf}) with
the $N$-dependent potential $v_N (x)= N^{3\beta} V (N^{\beta} x)$,
for $\beta \neq 1$. If $\beta <1$, the short scale structure
developed by the solution of the Schr\"odinger equation is still
characterized by the length scale $1/N$ (because the scattering
length of $N^{3\beta-1} V(N^{\beta} x)$ is still of order $1/N$);
but this time the potential varies on much larger scales, of the
order $N^{-\beta} \gg N^{-1}$. For this reason, if $\beta<1$, the
scattering length does not appear in the effective macroscopic
equation ($8\pi a_0$ is replaced by $b_0 = \int \rd x \, V(x)$). In
\cite{ESY3} (and previously in \cite{ESY2} for $0< \beta <1/2$) we
prove in fact that Corollary \ref{cor:product} can be extended to
include the case $0< \beta <1$ as follows.
\begin{theorem}
Suppose $\psi_N (\bx) = \prod_{j=1}^N \ph (x_j)$, for some $\ph \in
H^1 (\bR^3)$. Let $\psi_{N,t}= e^{-iH_{\beta,N} t} \psi_N$ with the
mean-field Hamiltonian
\[ H_{\beta,N} = \sum_{j=1}^N -\Delta_j + \frac{1}{N} \sum_{i<j}^N N^{3\beta}
V(N^{\beta}(x_i -x_j)) \] for a positive, spherical symmetric,
compactly supported, and smooth potential $V$ such that $\a$
(defined in (\ref{eq:alpha})) is sufficiently small. Let
$\gamma^{(1)}_{N,t}$ be the one-particle density associated with
$\psi_{N,t}$. Then, if $\; 0< \beta \leq 1$ we have, for any fixed
$t \in \bR$, \( \gamma^{(1)}_{N,t} \to |\ph_t \rangle \langle
\ph_t|\) as $N \to \infty$. Here $\ph_t$ is the solution to the
nonlinear Schr\"odinger equation
\[ i\partial_t \ph_t = -\Delta \ph_t + \sigma |\ph_t|^2 \ph_t \]
with initial data $\ph_{t=0} = \ph$ and with \[ \sigma = \left\{
\begin{array}{ll} 8 \pi a_0 \quad & \text{if } \beta =1 \\
b_0  \quad & \text{if } 0 < \beta < 1
\end{array} \right. \, .\]
\end{theorem}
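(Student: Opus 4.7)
The plan is to reduce the endpoint case $\beta = 1$ to Corollary \ref{cor:product} and to run the three-step strategy of Section \ref{sec:proof} for $0 < \beta < 1$. When $\beta = 1$ we have $N^{-1} v_N(x) = N^2 V(Nx) = V_N(x)$, so $H_{1,N}$ coincides with the Hamiltonian $H_N$ of (\ref{eq:ham}); since a pure tensor product $\ph^{\otimes N}$ is trivially asymptotically factorized (take $\xi^{(N-k)}_N = \ph^{\otimes(N-k)}$ in (\ref{eq:asy})), the conclusion is exactly Corollary \ref{cor:product}, with coupling constant $8\pi a_0$.

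For $0 < \beta < 1$ I would follow the same three-step approach, applied to the BBGKY hierarchy for $H_{\beta,N}$,
\begin{equation*}
i\partial_t \gamma^{(k)}_{N,t} = \sum_{j=1}^k [-\Delta_j, \gamma^{(k)}_{N,t}] + \frac{1}{N}\sum_{i<j}^k [v_N(x_i-x_j), \gamma^{(k)}_{N,t}] + \frac{N-k}{N}\sum_{j=1}^k \tr_{k+1}[v_N(x_j-x_{k+1}), \gamma^{(k+1)}_{N,t}].
\end{equation*}
Step 1 (compactness of $\{\gamma^{(k)}_{N,t}\}$ in $C([0,T], \cL^1_k)$) proceeds exactly as described in Section \ref{sec:proof}, provided one has a priori bounds $\langle \psi_{N,t}, H_{\beta,N}^k \psi_{N,t}\rangle \leq C^k N^k$, which follow from energy conservation together with a direct calculation on the initial datum $\ph^{\otimes N}$ (the smallness of $\alpha$ enters here, to control $H_{\beta,N}^k$ from below by its kinetic part up to manageable errors). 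In Step 2 the scales decouple: because the potential varies on the scale $N^{-\beta}$, much larger than the correlation scale $N^{-1}$, the correlations $f_N$ do \emph{not} affect the limit. Testing against a smooth observable and using the Sobolev bounds from Step 1, one obtains directly $(N-k)N^{-1} v_N(x_j-x_{k+1}) \to b_0 \, \delta(x_j-x_{k+1})$ in the appropriate weak sense, so every limit point $\Gamma_{\infty,t}$ satisfies the infinite hierarchy of Section \ref{sec:proof} with $8\pi a_0$ replaced by $b_0 = \int V(x)\,\rd x$. Step 3 is uniqueness in that hierarchy, proved by iterating Duhamel's formula and bounding the resulting series of Feynman-graph contributions via Sobolev/Strichartz estimates in the spirit of \cite{EY,ESY2,ESY3}; since $\gamma^{(k)}_t = |\ph_t\rangle\langle\ph_t|^{\otimes k}$ is a solution when $\ph_t$ solves $i\partial_t\ph_t = -\Delta\ph_t + b_0|\ph_t|^2\ph_t$, it is the only one, and convergence in the weak* (hence trace-norm) topology follows.

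The main obstacle is not, perhaps surprisingly, Step 2---for $\beta < 1$ the scale separation makes the passage to the limit essentially a soft calculation, in marked contrast with the Gross-Pitaevskii case. Rather, the work concentrates in Step 1, in establishing the a priori energy estimates $\langle \psi_{N,t}, H_{\beta,N}^k \psi_{N,t}\rangle \leq C^k N^k$ uniformly in $N$ for every $k \geq 1$: the cross-terms produced by expanding $H_{\beta,N}^k$ couple the singular $N$-dependent potential to high derivatives, and controlling them forces one to exploit the smallness of $\alpha$ carefully. The uniqueness argument in Step 3 is equally delicate, since the combinatorial bound on the Duhamel expansion must converge uniformly in the limiting marginals. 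As $\beta \uparrow 1$ the error terms in Step 2 eventually degrade, which is precisely why the endpoint $\beta = 1$ is handled separately through the Gross-Pitaevskii analysis of \cite{ESY3} rather than by a direct limit argument within the present scheme.
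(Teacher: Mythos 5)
Your overall architecture is the right one and matches the paper's: the endpoint $\beta=1$ collapses to Corollary \ref{cor:product} (since $N^{-1}N^{3\beta}V(N^{\beta}\cdot)=N^2V(N\cdot)=V_N$ and tensor powers are trivially asymptotically factorized), and for $0<\beta<1$ you run the compactness / convergence-to-hierarchy / uniqueness scheme of Section \ref{sec:proof} with coupling $b_0$ rather than $8\pi a_0$, exactly as the paper says is done in \cite{ESY2,ESY3}. Your key observation about scale separation (potential on scale $N^{-\beta}$, correlations on scale $N^{-1}$, so $f_N$ does not enter the limiting coupling) is precisely the mechanism the paper points to.

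Two of your assessments, however, are off in ways worth flagging. First, calling Step~2 for $\beta<1$ ``essentially a soft calculation'' understates the problem. Even though one does not need to extract an $f_N$-factor to identify the correct coupling, the passage $(N-k)N^{-1}v_N(x_j-x_{k+1})\,\gamma^{(k+1)}_{N,t}\to b_0\,\delta(x_j-x_{k+1})\,\gamma^{(k+1)}_{\infty,t}$ tested against smooth observables still requires uniform regularity of $\gamma^{(k+1)}_{N,t}$ in the $x_j,x_{k+1}$ variables, and the wave function still develops correlations on scale $N^{-1}$; those correlations spoil naive derivative bounds on $\psi_{N,t}$ just as in the $\beta=1$ case. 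This is why energy estimates of the type in Propositions \ref{prop:enest} and \ref{prop:highen} (with the $f_N$ division or the $\theta_j$ cutoffs) are needed, and it is also why \cite{ESY2} covers only $\beta<1/2$ while \cite{ESY3} was required to reach $\beta<1$: the convergence step is where that extension happens, not just the energy bounds feeding it. Second, your closing picture of $\beta\uparrow1$ as a continuous degradation of Step~2 error terms, with $\beta=1$ handled separately merely because the errors blow up, misses that the limiting coupling is \emph{discontinuous} at $\beta=1$: for $V\ge0$, $8\pi a_0=\int Vf<\int V=b_0$ strictly. The jump from $b_0$ to $8\pi a_0$ is the signature of a qualitative change of regime (the potential finally reaches the correlation scale and a nontrivial two-body scattering problem must be resolved), not merely the worst endpoint of a family of $\beta<1$ estimates. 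These corrections do not invalidate your proof outline, which is sound, but they realign where the actual work lives and why $\beta=1$ is singled out.
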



\section{Convergence to the Infinite Hierarchy}\label{sec:conv}
\setcounter{equation}{0}

In this section we give some more details concerning Step 2 in the
strategy outlined above. We consider a limit point
$\Gamma_{\infty,t} = \{ \gamma^{(k)}_{\infty,t} \}_{k \geq 1}$ of
the sequence $\Gamma_{N,t} = \{ \gamma^{(k)}_{N,t} \}_{k=1}^N$ and
we prove that $\Gamma_{\infty,t}$ satisfies the infinite hierarchy
(\ref{eq:GPhier}). To this end we use that, for finite $N$, the
family $\Gamma_{N,t}$ satisfies the BBGKY hierarchy
(\ref{eq:BBGKY}), and we show the convergence of each term in
(\ref{eq:BBGKY}) to the corresponding term in the infinite hierarchy
(\ref{eq:GPhier}) (the second term on the r.h.s. of (\ref{eq:BBGKY})
is of smaller order and can be proven to vanish in the limit $N \to
\infty$).

\medskip

The main difficulty consists in proving the convergence of the last
term on the right hand side of (\ref{eq:BBGKY}) to the last term on
the right hand side of (\ref{eq:GPhier}). In particular, we need to
show that in the limit $N \to \infty$ we can replace the potential
$(N-k) N^2 V (N (x_j -x_{k+1})) \simeq N^3 V(Nx)$ in the last term
on the r.h.s. of (\ref{eq:BBGKY}) by $8 \pi a_0 \delta (x_j -
x_{k+1})$ . In terms of kernels we have to prove that
\begin{equation}\label{eq:co} \int \rd x_{k+1} \, \left( N^3 V (N (x_j -
x_{k+1})) - 8 \pi a_0 \delta (x_j -x_{k+1}) \right) \,
\gamma^{(k+1)}_{N,t} (\bx_k , x_{k+1}, \bx'_k , x_{k+1} ) \to 0
\end{equation} as $N \to \infty$. It is enough to prove the convergence
(\ref{eq:co}) in a weak sense, after testing the expression against
a smooth $k$-particle kernel $J^{(k)} (\bx_k ; \bx'_k)$. Note,
however, that the observable $J^{(k)}$ does not help to perform the
integration over the variable $x_{k+1}$.

\medskip

The problem here is that, formally, the $N$-dependent potential $N^3
V(N (x_j -x_{k+1}))$ does not converge towards $8 \pi a_0 \delta
(x_j - x_{k+1})$ as $N \to \infty$ (it converges towards $b_0 \delta
(x_j -x_{k+1})$, with $b_0 = \int \rd x \, V(x)$). Eq. (\ref{eq:co})
is only correct because of the correlations between $x_j$ and
$x_{k+1}$ hidden in the density $\gamma^{(k+1)}_{N,t}$. Therefore,
to prove (\ref{eq:co}), we start by factoring out the correlations
explicitly, and by proving that, as $N \to \infty$,
\begin{equation}\label{eq:co2} \int \rd x_{k+1} \, \left( N^3 V (N (x_j -
x_{k+1})) f_N (x_j -x_{k+1}) - 8 \pi a_0 \delta (x_j -x_{k+1})
\right) \,  \frac{\gamma^{(k+1)}_{N,t} (\bx_k , x_{k+1}, \bx'_k ,
x_{k+1} )}{f_N (x_j -x_{k+1})} \to 0,
\end{equation}
where $f_N (x)$ is the solution to the zero energy scattering
equation (\ref{eq:0en-N}). Then, in a second step, we use the fact
that $f_N \to 1$ in the weak limit $N \to \infty$, to prove that the
ratio $\gamma^{(k+1)}_{N,t}/ f_N (x_j -x_{k+1})$ converges to the
same limiting density $\gamma_{\infty,t}^{(k+1)}$ as
$\gamma^{(k+1)}_{N,t}$. Eq. (\ref{eq:co2}) looks now much better
than (\ref{eq:co}) because, formally, $N^3 V (N (x_j -x_{k+1})) f_N
(x_j -x_{k+1})$ does converge to $8\pi a_0 \delta (x_j - x_{k+1})$.
To prove that (\ref{eq:co2}) is indeed correct, we only need some
regularity of the ratio $\gamma^{(k+1)}_{N,t} (\bx_k , x_{k+1};
\bx'_k ,x_{k+1})/ f_N (x_j -x_{k+1})$ in the variables $x_j$ and
$x_{k+1}$. In terms of the $N$-particle wave function $\psi_{N,t}$
we need regularity of $\psi_{N,t} (\bx)/f_N (x_i - x_j)$ in the
variables $x_i$, $x_j$, for any $i \neq j$. To establish the
required regularity we use the following energy estimate.

\begin{proposition}\label{prop:enest}
Consider the Hamiltonian $H_N$ defined in (\ref{eq:ham}), with a
positive, spherical symmetric, smooth and compactly supported
potential $V$. Suppose that $\a$ (defined in (\ref{eq:alpha})) is
sufficiently small. Then there exists $C = C(\alpha)
>0$ such that
\begin{equation}\label{eq:enesti}
\langle \psi, H_N^2 \psi \rangle \geq C N^2 \, \int \rd \bx \;
\left| \nabla_{i} \nabla_{j} \frac{\psi (\bx)}{f_N (x_i -x_j)}
\right|^2 \, .
\end{equation}
for all $i \neq j$ and for all $\psi \in L^2_s (\bR^{3N}, \rd \bx)$.
\end{proposition}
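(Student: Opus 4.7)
The plan is to reduce the estimate to a per-pair bound by bosonic symmetry, then exploit the zero-energy scattering equation to absorb the short-scale singularity of $\psi$ into the factor $f_N(x_i - x_j)$. By the permutation symmetry of $\psi$, the integral $\int \rd \bx \, |\nabla_i \nabla_j (\psi/f_N(x_i - x_j))|^2$ is independent of the pair $(i,j)$, so the stated inequality is equivalent to
\[
\langle \psi, H_N^2 \psi \rangle \geq C \sum_{i<j} \int \rd \bx \, \Big| \nabla_i \nabla_j \frac{\psi(\bx)}{f_N(x_i - x_j)} \Big|^2 \, ,
\]
and the factor $N^2$ in the Proposition emerges from summing over the $\binom{N}{2}$ pairs.

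The key algebraic tool is a substitution based on the zero-energy scattering equation. Setting $h_{ij} = -\Delta_i - \Delta_j + V_N(x_i - x_j)$ and $\phi_{ij} = \psi/f_N(x_i - x_j)$, the identity $(\Delta_i + \Delta_j) f_N(x_i - x_j) = V_N(x_i - x_j) f_N(x_i - x_j)$, which follows directly from (\ref{eq:0en-N}), combined with the product rule for Laplacians yields
\[
h_{ij}(f_N(x_i-x_j)\,\phi_{ij}) = -f_N(x_i-x_j)(\Delta_i + \Delta_j)\phi_{ij} - 2(\nabla f_N)(x_i - x_j) \cdot (\nabla_i - \nabla_j)\phi_{ij} \, .
\]
The singular potential $V_N$ has thus been absorbed into the scattering factor, leaving only derivatives of the smoother function $\phi_{ij}$. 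Integration by parts then gives $\int f_N^2 (\Delta_i \phi_{ij})\overline{(\Delta_j \phi_{ij})} = \int f_N^2 |\nabla_i \nabla_j \phi_{ij}|^2$ up to corrections involving $\nabla f_N$, producing exactly the quantity one wants to control.

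The bulk of the proof is expanding $\|H_N \psi\|^2$ and extracting, for each pair $(i,j)$, a positive contribution comparable to $\int |\nabla_i \nabla_j \phi_{ij}|^2$. Writing $H_N = h_{ij} + \wt H_{ij}$ with $\wt H_{ij} \geq 0$, the kinetic energies $-\Delta_k$ for $k \notin \{i,j\}$ commute with multiplication by $f_N(x_i - x_j)$, so they pass through the substitution cleanly. The remaining cross terms involve either other pair interactions $V_N(x_k - x_l)$ sharing a particle with $(i,j)$ or gradient corrections from $\nabla f_N$. Using the conjugation identity to rewrite all contributions in terms of $\phi_{ij}$, the leading positive term that survives is $\int f_N^2 |\nabla_i \nabla_j \phi_{ij}|^2$, which is comparable to $\int |\nabla_i \nabla_j \phi_{ij}|^2$ since $f_N$ is bounded below by a positive constant on the region where it contributes.

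The main technical obstacle is the careful bookkeeping of the many error terms. Each pair $(i,j)$ interacts with the neighboring pairs $(i,k),(j,k)$ through terms like $V_N(x_i - x_j)\, V_N(x_j - x_k)$ or $\nabla f_N(x_i - x_j) \cdot \nabla f_N(x_i - x_k)$, as well as commutator-type contributions when moving $f_N(x_i-x_j)^{-1}$ past $V_N(x_i-x_k)$. These are estimated via the weighted bounds implicit in $\alpha = \sup|x|^2 V(x) + \int V(x)/|x|$ together with the pointwise decay $|\nabla f_N(x)| \lesssim 1/(N|x|^2)$, so that each error is bounded by $\alpha$ times the main positive term. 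For $\alpha$ sufficiently small all errors can be absorbed, yielding the per-pair estimate $\langle \psi, H_N^2 \psi \rangle \geq C(\alpha) \sum_{i<j} \int |\nabla_i \nabla_j \phi_{ij}|^2$, from which the stated $N^2$ lower bound follows by bosonic symmetry.
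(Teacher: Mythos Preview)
Your core idea---factor out $f_N(x_i-x_j)$ and use the zero-energy scattering equation to eliminate the singular potential---is exactly the right one, and your conjugation identity is correct. But the route you sketch to extract the $N^2$ factor is much more convoluted than necessary, and as written it has a gap.

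The paper does not decompose $H_N = h_{ij} + \wt H_{ij}$ for a fixed pair and then track cross-pair interactions. Instead it writes $H_N = \sum_{j} h_j$ with $h_j = -\Delta_j + \tfrac12\sum_{m\neq j} V_N(x_m-x_j)$, so that by bosonic symmetry $\langle\psi, H_N^2\psi\rangle = N\langle\psi, h_i^2\psi\rangle + N(N-1)\langle\psi, h_i h_j\psi\rangle \geq N(N-1)\langle\psi, h_i h_j\psi\rangle$. Then, using only the \emph{positivity} of $V$, one drops every potential term in $h_i$ and $h_j$ except the shared one $V_N(x_i-x_j)$, obtaining
\[
\langle\psi, H_N^2\psi\rangle \geq N(N-1)\,\big\langle\psi,\,(-\Delta_i+\tfrac12 V_N(x_i-x_j))(-\Delta_j+\tfrac12 V_N(x_i-x_j))\,\psi\big\rangle .
\]
This single step both produces the $N^2$ factor and eliminates all the ``neighboring pair'' terms $V_N(x_i-x_j)V_N(x_j-x_k)$ and $\nabla f_N(x_i-x_j)\cdot\nabla f_N(x_i-x_k)$ that you list as the main technical obstacle. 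There is no multi-pair bookkeeping at all; only after this reduction does one substitute $\psi = f_N(x_i-x_j)\phi$ and integrate by parts in the weighted space $L^2(f_N^2\,\rd\bx)$.

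Two further points. First, the error term after the conjugation is controlled not merely by smallness of $\alpha$ but by \emph{Hardy's inequality}: the commutator $[\nabla_i, L_j]$ produces a factor $\nabla_i(\nabla f_N/f_N)$ of size $C\alpha/|x_i-x_j|^2$, and one bounds $\int |x_i-x_j|^{-2}|\nabla_i\phi|^2 \leq C\int|\nabla_i\nabla_j\phi|^2$. Second, your pointwise bound $|\nabla f_N(x)|\lesssim 1/(N|x|^2)$ is only valid outside the support of $V_N$; the uniform bound one actually uses is $|\nabla f_N(x)|\leq C\alpha/|x|$ and $|\nabla^2 f_N(x)|\leq C\alpha/|x|^2$.
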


Making use of this energy estimate it is possible to deduce strong
a-priori bounds on the solution $\psi_{N,t}$ of the Schr\"odinger
equation (\ref{eq:schr}). These bounds have the form
\begin{equation}\label{eq:aprior}
\int \rd \bx \, \left| \nabla_{i} \nabla_{j} \frac{\psi_{N,t}
(\bx)}{f_N (x_i -x_j)} \right|^2 \leq C
\end{equation}
uniformly in $N \in \bN$ and $t \in \bR$. To prove (\ref{eq:aprior})
we use that, by (\ref{eq:enesti}), and because of the conservation
of the energy along the time evolution,
\begin{equation}\label{eq:apri2} \int \rd \bx \, \left| \nabla_{i} \nabla_{j}
\frac{\psi_{N,t} (\bx)}{f_N (x_i -x_j)} \right|^2 \leq C N^{-2}
\langle \psi_{N,t}, H_N^2 \psi_{N,t} \rangle = C N^{-2} \langle
\psi_{N,0}, H_N^2 \psi_{N,0} \rangle \, .\end{equation} {F}rom
(\ref{eq:apri2}) and using an approximation argument on the initial
wave function to make sure that the expectation of $H^2_N$ at time
$t=0$ is of the order $N^2$, we obtain (\ref{eq:aprior}).

\medskip

The bounds (\ref{eq:aprior}) are then sufficient to prove the
convergence (\ref{eq:co}) (using a non-standard Poincar\'e
inequality; see Lemma 7.2 in \cite{ESY3}).

\medskip

Remark that the a-priori bounds (\ref{eq:aprior}) do not hold true
if we do not divide the solution $\psi_{N,t}$ of the Schr\"odinger
equation by $f_N (x_i -x_j)$ (replacing $\psi_{N,t} (\bx)/ f_N (x_i
-x_j)$ by $\psi_N (\bx)$ the integral in (\ref{eq:aprior}) would be
of order $N$). It is only after removing the singular factor $f_N
(x_i -x_j)$ from $\psi_{N,t} (\bx)$ that we can prove useful bounds
on the regular part of the wave function.

\medskip

It is through the a-priori bounds (\ref{eq:aprior}) that we identify
the correlation structure of the wave function $\psi_{N,t}$ and that
we show that, when $x_i$ and $x_j$ are close to each other,
$\psi_{N,t} (\bx)$ can be approximated by the time independent
singular factor $f_N (x_i -x_j)$, which varies on the length scale
$1/N$, multiplied with a regular part (regular in the sense that it
satisfy the bounds (\ref{eq:aprior})). It is therefore through
(\ref{eq:aprior}) that we establish the strong separation of scales
in the wave function $\psi_{N,t}$ and in the marginal densities
$\gamma^{(k)}_{N,t}$ which is of fundamental importance for the
Gross-Pitaevskii theory.

\medskip

Since it is quite short and it shows why the solution $f_N (x_i
-x_j)$ to the zero energy scattering equation (\ref{eq:0en}) can be
used to describe the two-particle correlations, we reproduce in the
following the proof Proposition \ref{prop:enest}. Note that this is
the only step in the proof of our main theorem where the smallness
of constant $\a$, measuring the strength of the interaction
potential, is used. The positivity of the interaction potential, on
the other hand, also plays an important role in many other parts of
the proof.

\begin{proof}[Proof of Proposition \ref{prop:enest}]
We decompose the Hamiltonian (\ref{eq:ham}) as
\[ H_N = \sum_{j=1}^N \, h_j \qquad \text{with} \qquad h_j = -\Delta_j
+ \frac{1}{2} \sum_{i \neq j} \, V_N (x_i - x_j)\,. \] For an
arbitrary permutation symmetric wave function $\psi$ and for any
fixed $i \neq j$, we have
\[ \langle \psi, H^2_N \psi \rangle = N \langle \psi, h_i^2
\psi \rangle + N (N-1) \langle \psi, h_i h_j \psi \rangle \geq N
(N-1) \langle \psi, h_i h_j \psi \rangle \,. \] Using the positivity
of the potential, we find
\begin{equation}\label{eq:pro1}\langle \psi, H^2_N \psi \rangle \geq N
(N-1) \left\langle \psi, \left(-\Delta_i + \frac{1}{2} V_N (x_i
-x_j) \right) \left(-\Delta_j + \frac{1}{2} V_N (x_i -x_j) \right)
\psi \right\rangle \,. \end{equation} Next, we define $\phi (\bx)$
by $\psi (\bx) = f_N (x_i -x_j) \, \phi (\bx)$ ($\phi$ is well
defined because $f_N (x) >0$ for all $x \in \bR^3$); note that the
definition of the function $\phi$ depends on the choice of $i,j$.
Then
\[ \frac{1}{f_N (x_i -x_j)} \Delta_i \, \left( f_N (x_i-x_j)
\phi (\bx) \right) = \Delta_i \phi (\bx) + \frac{(\Delta f_N )(x_i
-x_j)}{f_N (x_i -x_j)} \phi (\bx) + \frac{\nabla f_N (x_i -x_j)}{f_N
(x_i -x_j)} \nabla_i \phi (\bx) \, . \] {F}rom (\ref{eq:0en}) it
follows that
\[ \frac{1}{f_N (x_i -x_j)} \left( -\Delta_i + \frac{1}{2} V_N (x_i
-x_j) \right) f_N (x_i -x_j) \phi (\bx) = L_i \phi (\bx)
\] and analogously
\[
\frac{1}{f_N (x_i -x_j)} \left(-\Delta_j + \frac{1}{2} V_N (x_i
-x_j) \right) f_N (x_i -x_j) \phi (\bx) = L_j \phi (\bx)
\]
where we defined
\[ L_{\ell}  = -\Delta_{\ell} + 2 \frac{\nabla_{\ell} \, f_N (x_i
-x_j)}{f_N (x_i -x_j)} \, \nabla_{\ell}, \qquad \text{for } \quad
\ell=i,j \,.
\] Remark that, for $\ell =i,j$, the operator $L_{\ell}$ satisfies
\[ \int \rd \bx \, f_N^2 (x_i -x_j) \; L_{\ell} \, \overline{\phi} (\bx) \;
\psi (\bx) =\int \rd \bx \, f_N^2 (x_i -x_j) \; \overline{\phi}
(\bx) \; L_{\ell} \, \psi (\bx) = \int \rd \bx \, f_N^2 (x_i -x_j)
\; \nabla_{\ell} \, \overline{\phi} (\bx) \; \nabla_{\ell} \, \psi
(\bx) \, .\] Therefore, from (\ref{eq:pro1}), we obtain
\begin{equation}\label{eq:pro2}
\begin{split}
\langle \psi, H^2_N \psi \rangle \geq \; &N (N-1) \int \rd \bx \;
f^2_N (x_i -x_j) \; L_i\, \overline{\phi} (\bx)\, L_j\, \phi (\bx)
\\ = \;&N (N-1) \int \rd \bx \; f^2_N (x_i -x_j) \; \nabla_i
\overline{\phi} (\bx)\, \nabla_i L_j \, \phi (\bx)
\\= \;&N (N-1) \int \rd \bx \; f^2_N (x_i -x_j) \;
\nabla_i \overline{\phi} (\bx)\, L_j \, \nabla_i \phi (\bx)
\\ &+N (N-1) \int \rd \bx \; f^2_N (x_i -x_j) \;
\nabla_i \overline{\phi} (\bx)\, [\nabla_i, L_j] \phi (\bx)
\\= \;&N (N-1) \int \rd \bx \; f^2_N (x_i -x_j) \; \left| \nabla_j
\nabla_i \phi (\bx) \right|^2
\\ &+N (N-1) \int \rd \bx \; f^2_N (x_i -x_j) \; \left( \nabla_i
\frac{\nabla f_N (x_i -x_j)}{f_N (x_i -x_j)} \right) \;  \nabla_i
\overline{\phi} (\bx)\, \nabla_j \phi (\bx)\,.
\end{split}
\end{equation}
To control the second term on the right hand side of the last
equation we use bounds on the function $f_N$, which can be derived
from the zero energy scattering equation (\ref{eq:0en}):
\begin{equation}\label{eq:pro3} 1 - C \alpha \leq f_N (x) \leq 1, \quad |\nabla f_N
(x)| \leq C \frac{ \alpha }{|x|}, \quad |\nabla^2 f_N (x) | \leq C
\frac{\alpha}{|x|^2} \end{equation} for constants $C$ independent of
$N$ and of the potential $V$ (recall the definition of the
dimensionless constant $\alpha$ from (\ref{eq:alpha})). Therefore,
for $\alpha <1$,
\begin{equation}
\begin{split}
\Big| \int \rd \bx \; f^2_N (x_i -x_j) \; &\left( \nabla_i
\frac{\nabla f_N (x_i -x_j)}{f_N (x_i -x_j)} \right) \;  \nabla_i
\overline{\phi} (\bx)\, \nabla_j \phi (\bx) \Big| \\
\leq \; & C \alpha \int \rd \bx \; \frac{1}{|x_i -x_j|^2} \,
|\nabla_i \phi (\bx)| \, |\nabla_j \phi (\bx)| \\ \leq \; & C \alpha
\int \rd \bx \; \frac{1}{|x_i -x_j|^2} \, \left( |\nabla_i \phi
(\bx)|^2 + |\nabla_j \phi (\bx)|^2 \right)
\\ \leq \; &C \alpha \int \rd \bx \; |\nabla_i \nabla_j \phi
(\bx)|^2
\end{split}
\end{equation}
where we used Hardy inequality. Thus, from (\ref{eq:pro2}), and
using again the first bound in (\ref{eq:pro3}), we obtain
\[\langle \psi, H^2_N \psi \rangle \geq \; N (N-1) (1- C \alpha)
\int \rd \bx \left| \nabla_i \nabla_j \phi (\bx) \right|^2 \] which
implies (\ref{eq:enesti}).
\end{proof}

\section{Uniqueness of the Solution to the Infinite
Hierarchy}\label{sec:unique}\setcounter{equation}{0}

In this section we discuss the main ideas used to prove the
uniqueness of the solution to the infinite hierarchy (Step 3 in the
strategy outlined in Section \ref{sec:proof}).

\medskip

First of all, we need to specify in which class of family of
densities $\Gamma_t = \{ \gamma_{t}^{(k)} \}_{k \geq 1}$ we want to
prove the uniqueness of the solution to the infinite hierarchy
(\ref{eq:GPhier}). Clearly, the proof of the uniqueness is simpler
if we can restrict our attention to smaller classes. But of course,
in order to apply the uniqueness result to prove Theorem
\ref{thm:main}, we need to make sure that any limit point of the
sequence $\Gamma_{N,t} = \{ \gamma^{(k)}_{N,t} \}_{k=1}^N$ is in the
class for which we can prove uniqueness.

\medskip

We are going to prove uniqueness for all families $\Gamma_t = \{
\gamma^{(k)}_t \}_{k \geq 1} \in \bigoplus C([0,T], \cL_k^1)$ with
\begin{equation}\label{eq:aprk} \| \gamma_t^{(k)} \|_{\cH_k} := \tr \; \left|
(1-\Delta_1)^{1/2} \dots (1-\Delta_k)^{1/2} \, \gamma^{(k)}_t
(1-\Delta_k)^{1/2} \dots (1-\Delta_1)^{1/2} \right| \leq C^k
\end{equation} for all $t \in [0,T]$ and for all $k \geq 1$ (with a constant
$C$ independent of $k$).

\medskip

The following proposition guarantees that any limit point of the
sequence $\Gamma_{N,t}$ satisfies (\ref{eq:aprk}).

\begin{proposition} \label{prop:aprik} Assume the same conditions
as in Proposition \ref{prop:enest}. Suppose that $\Gamma_{\infty,t}
= \{\gamma_{\infty,t}^{(k)} \}_{k\geq 1}$ is a limit point of
$\Gamma_{N,t} = \{ \gamma^{(k)}_{N,t} \}_{k=1}^N$ with respect to
the product topology on $\bigoplus_{k \geq 1} C([0,T], \cL^1_k)$.
Then $\gamma^{(k)}_{\infty,t} \geq 0$ and there exists a constant
$C$ such that
\begin{equation}\label{eq:aprik3}
\tr \; (1- \Delta_1) \dots (1-\Delta_k) \gamma^{(k)}_{\infty,t} \leq
C^k
\end{equation}
for all $k \geq 1$ and $t \in [0,T]$.
\end{proposition}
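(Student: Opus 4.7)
The plan is to obtain uniform-in-$N$ bounds on a regularized version of the marginals $\gamma_{N,t}^{(k)}$ and transfer these bounds to the weak$^*$ limit. The regularization is needed because the singular short-scale correlations in $\psi_{N,t}$ prevent $\tr(1-\Delta_1)\cdots(1-\Delta_k)\gamma_{N,t}^{(k)}$ itself from being bounded uniformly in $N$; however, once the scattering factors $f_N(x_i-x_j)$ are factored out, the regular part has $k$ derivatives controlled in $L^2$, and $f_N\to 1$ in the limit.

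\textbf{Step 1: Higher-order energy estimate.} Generalizing Proposition \ref{prop:enest}, I would prove by induction on $k$ an inequality of the shape
\[
\langle \psi,(H_N+CN)^k\psi\rangle \;\geq\; (CN)^k \int \rd\bx\,\bigl|\nabla_{1}\cdots\nabla_{k}\,\phi_{N,k}(\bx)\bigr|^2,
\]
where $\phi_{N,k}(\bx)=\psi(\bx)/F_{N,k}(\bx)$ and $F_{N,k}$ is a product of scattering factors $f_N(x_i-x_j)$ chosen to cancel the singularities that obstruct differentiation in the first $k$ variables. The base case $k=2$ is Proposition \ref{prop:enest}. The induction step mimics that proof: decompose $H_N=\sum_j h_j$, keep only the diagonal contributions $h_{j_1}\cdots h_{j_k}$ with distinct indices (the remaining terms are non-negative by positivity of $V_N$), and commute the derivatives past $F_{N,k}$ using the identity $f_N^{-1}\bigl(-\Delta_\ell+\tfrac12 V_N(x_i-x_j)\bigr)f_N=L_\ell$ for $\ell\in\{i,j\}$. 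Cross-terms generated by commutators $[\nabla_\ell,L_m]$ are controlled by the pointwise bounds (4.9) on $\nabla f_N/f_N$ combined with Hardy's inequality, at the price of a factor $(1+C\alpha)$ per step; iterating yields the $k$-th order bound with a constant $C=C(\alpha)$ that remains bounded for sufficiently small $\alpha$.

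\textbf{Step 2: Conservation and uniform a-priori bounds.} Since $H_N$ commutes with $e^{-iH_N t}$, we have
\[
\bigl\langle\psi_{N,t},(H_N+CN)^k\psi_{N,t}\bigr\rangle=\bigl\langle\psi_{N,0},(H_N+CN)^k\psi_{N,0}\bigr\rangle.
\]
By the same approximation argument used to secure (4.5) for $k=2$ (replacing the initial $\psi_N$ by a suitable smoothed wave function adapted to $H_N$), the right-hand side is bounded by $(C'N)^k$ uniformly in $N$. Combining with Step 1 gives
\[
\int \rd\bx\,\prod_{j=1}^k(1+|\nabla_j|^2)\bigl|\phi_{N,t,k}(\bx)\bigr|^2 \;\leq\; C^k,
\]
for all $t\in[0,T]$ and $N\geq k$. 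Equivalently, the $k$-particle density $\wt\gamma_{N,t}^{(k)}$ built from $\phi_{N,t,k}$ satisfies $\|\wt\gamma_{N,t}^{(k)}\|_{\cH_k}\leq C^k$.

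\textbf{Step 3: Passing to the limit.} Because $F_{N,k}\to 1$ almost everywhere and is bounded above and below by constants (via the first inequality in (4.9)), dominated convergence shows that $\wt\gamma_{N,t}^{(k)}$ and $\gamma_{N,t}^{(k)}$ have the same weak$^*$ limit $\gamma_{\infty,t}^{(k)}$ in $\cL^1_k$. Testing the uniform bound of Step 2 against the compact operators $P_M(1-\Delta_1)\cdots(1-\Delta_k)P_M$, where $P_M$ is a spectral cutoff for $\sum_{j\leq k}(1-\Delta_j)$, and letting $M\to\infty$ by monotone convergence, yields
\[
\tr\bigl((1-\Delta_1)\cdots(1-\Delta_k)\gamma_{\infty,t}^{(k)}\bigr) \;\leq\; \liminf_{N\to\infty}\tr\bigl((1-\Delta_1)\cdots(1-\Delta_k)\wt\gamma_{N,t}^{(k)}\bigr) \;\leq\; C^k,
\]
which is (\ref{eq:aprik3}). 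Positivity of $\gamma^{(k)}_{\infty,t}$ is immediate: each $\gamma_{N,t}^{(k)}\geq 0$, and testing against positive compact operators is preserved under weak$^*$ limits. The principal obstacle is Step 1: commuting derivatives past the scattering factors $F_{N,k}$ produces a combinatorial proliferation of cross terms, and organizing them so that the total error is merely $(1+C\alpha)^k$ rather than growing faster than $C^k$ requires careful bookkeeping of which pair-correlations must be stripped at each stage and which are absorbed by positivity of the remaining potential.
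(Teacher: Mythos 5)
Your overall three-step architecture (higher-order energy estimate, conservation plus initial regularization, passage to the weak$^*$ limit) matches the paper's, but your Step~1 takes a genuinely different route. You propose to strip out a product $F_{N,k}$ of scattering factors $f_N(x_i-x_j)$ and bound $\int|\nabla_1\cdots\nabla_k(\psi/F_{N,k})|^2$, generalizing Proposition~\ref{prop:enest} by conjugation. The paper instead proves Proposition~\ref{prop:highen}: it introduces smooth cutoffs $\theta_1,\dots,\theta_{k-1}$ that vanish whenever another particle comes within a mesoscopic distance $\ell$ (with $N^{-1}\ll\ell\ll1$) of $x_1,\dots,x_{k-1}$, and bounds $\int\theta_1\cdots\theta_{k-1}\,|\nabla_1\cdots\nabla_k\psi|^2$ for the \emph{un}-conjugated $\psi$. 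The cutoff route is chosen precisely to dodge the obstacle you flag at the end of your write-up: once $x_1,\dots,x_{k-1}$ are isolated, there is simply no singular correlation structure near those variables, so no conjugation is needed and the combinatorics of cross-terms from $\nabla_\ell F_{N,k}$ never arises. The last derivative $\nabla_k$ is left free because $\nabla f_N(x)\sim|x|^{-1}$ is still locally $L^2$ in $\bR^3$; it is only the second derivative that produces a non-integrable singularity. The price is the condition $N\ell^2\gg1$ (to control the error from localizing the kinetic energy) together with $N\ell^3\ll1$ (to make the cutoffs negligible in the limit), which is exactly what Proposition~6.3 of \cite{ESY3} exploits in the last step.

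Your Step~1 sketch also contains a concrete gap. You assert that after expanding $(H_N+CN)^k=\sum h_{j_1}\cdots h_{j_k}$ one may "keep only the diagonal contributions with distinct indices, the remaining terms being non-negative by positivity of $V_N$." That does not follow: for $k\geq3$ a monomial such as $h_1 h_2 h_1 h_3$ with a repeated index is not self-adjoint, and its symmetrization need not be positive; positivity of $V_N$ only gives $h_j\geq0$, which is enough to discard $h_i^2$ in the $k=2$ proof but not to discard all repeated-index strings at higher order. The shift by $C_1 N$ in Proposition~\ref{prop:highen} is there precisely to absorb the (sign-indefinite) errors produced in the induction, which in \cite{ESY3} is organized quite differently from a naive multinomial expansion. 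As for whether a pure $F_{N,k}$-conjugation proof can be pushed through, you correctly identify the proliferation of cross-terms $[\nabla_\ell,\nabla_m\log F_{N,k}]$ as the crux, and you do not resolve it; the paper deliberately avoids it. Your Steps~2 and~3 are sound in spirit (the paper uses the same conservation-plus-regularization argument and also passes to the limit by showing the regularization becomes harmless, via Fatou/monotone convergence as you suggest), but the argument does not close without a correct Step~1.
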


Because of Proposition \ref{prop:aprik}, it is enough to prove the
uniqueness of the infinite hierarchy (\ref{eq:GPhier}) in the
following sense.

\begin{theorem}\label{thm:unique} Suppose that $\Gamma = \{ \gamma^{(k)} \}_{k \geq
1}$ is such that \begin{equation}\label{eq:aprit0} \| \gamma^{(k)}
\|_{\cH_k} \leq C^k\end{equation} for all $k \geq 1$ (the norm $\| .
\|_{\cH_k}$ is defined in (\ref{eq:aprk})). Then there exists at
most one solution $\Gamma_{t} = \{ \gamma^{(k)}_{t} \}_{k \geq 1}
\in \bigoplus C ([0,T], \cL_k)$ of (\ref{eq:GPhier}) such that
$\Gamma_{t=0} = \Gamma$ and
\begin{equation}\label{eq:apri}
\| \gamma^{(k)}_t \|_{\cH_k} \leq C^k \end{equation} for all $k \geq
1$ and all $t \in [0,T]$ (with the same constant $C$ as in
(\ref{eq:aprit0})).
\end{theorem}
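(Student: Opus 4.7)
The plan is to use a Duhamel-expansion / Feynman-graph argument in the spirit of Klainerman--Machedon (as adapted in \cite{ESY2,ESY3}). By linearity of (\ref{eq:GPhier}), it suffices to show that if $\Gamma_t$ is a solution with $\Gamma_{t=0} = 0$ that satisfies the uniform bound (\ref{eq:apri}), then $\gamma^{(k)}_t = 0$ for all $k \geq 1$ and $t \in [0,T]$. Denote by $\cU^{(k)}(t)$ the free $k$-particle evolution $e^{-it \sum_{j=1}^k \Delta_j} \cdot e^{it \sum_{j=1}^k \Delta_j}$ acting on $k$-particle density matrices, and introduce the collision operator
\begin{equation}
B^{(k+1)} \gamma^{(k+1)} = -8\pi i a_0 \sum_{j=1}^k \tr_{k+1} \bigl[ \delta(x_j - x_{k+1}), \gamma^{(k+1)} \bigr].
\end{equation}
Then (\ref{eq:GPhier}) can be rewritten in integral (Duhamel) form as
\begin{equation}
\gamma^{(k)}_t = \cU^{(k)}(t) \gamma^{(k)}_0 + \int_0^t \cU^{(k)}(t-s)\, B^{(k+1)} \gamma^{(k+1)}_s \, \rd s.
\end{equation}

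Next I would iterate this identity $n$ times. With zero initial data, each iteration shifts the marginal index up by one, so after $n$ steps $\gamma^{(k)}_t$ is expressed as a sum, over ordered tuples of times $t > s_1 > s_2 > \cdots > s_n > 0$ and over the choice, at each step $r$, of an index $j_r \in \{1, \dots, k+r-1\}$ marking which particle collides with the newly introduced $(k+r)$-th particle, of terms of the form
\begin{equation}
\int_0^t \!\! \cdots \!\! \int_0^{s_{n-1}} \cU^{(k)}(t-s_1)\, B^{(k+1)}_{j_1}\, \cU^{(k+1)}(s_1 - s_2) \cdots B^{(k+n)}_{j_n}\, \gamma^{(k+n)}_{s_n} \, \rd s_n \cdots \rd s_1 .
\end{equation}
Naively this produces $k(k+1)\cdots(k+n-1) \sim n!\, C^k$ terms, which is too many to sum. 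The key combinatorial step is the Klainerman--Machedon board game: one groups tuples $(j_1, \dots, j_n)$ into equivalence classes corresponding to the same binary tree of ``collision histories,'' and within each class the integrand is shown (by symmetrizing the time integrations) to be a single term integrated over a different simplex in $[0,t]^n$. The number of inequivalent classes grows only as $C^n$, so the total number of essentially distinct terms is bounded by $C^{k+n}$.

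The remaining task is a bound on each individual term. Using the smoothed norm $\|\cdot\|_{\cH_k}$ and the fact that $\cU^{(k)}$ preserves this norm, the estimate reduces to a Sobolev-type spacetime bound showing that the contraction $B^{(k+1)}$ gains one $H^1$-derivative per particle. Concretely one proves, for $\gamma^{(k+1)}$ symmetric,
\begin{equation}
\bigl\| B^{(k+1)}_j\, \gamma^{(k+1)} \bigr\|_{\cH_k} \leq C \, \bigl\| \gamma^{(k+1)} \bigr\|_{\cH_{k+1}},
\end{equation}
with a constant independent of $k$; the proof amounts to writing $\delta(x_j - x_{k+1})$ as the kernel that integrates out one coordinate and controlling it by the Sobolev embedding $H^1(\bR^3) \hookrightarrow L^p$ together with the extra $(1-\Delta_{k+1})^{1/2}$ factors already present in $\|\cdot\|_{\cH_{k+1}}$. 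Combined with the $n$-fold time integration, which contributes $t^n/n!$, and with the a priori bound $\|\gamma^{(k+n)}_s\|_{\cH_{k+n}} \leq C^{k+n}$, each individual term in the grouped expansion is bounded by $C^{k+n} (Ct)^n / n!$, while the number of groups is bounded by $C^n$. For $t$ sufficiently small (independent of $k$) the full sum is bounded by $C^k (Ct)^n$, which tends to $0$ as $n \to \infty$. This yields $\gamma^{(k)}_t = 0$ on a short time interval $[0,T_0]$, and iterating the argument on $[T_0, 2T_0], [2T_0, 3T_0], \dots$ (the bound on $T_0$ does not depend on the starting time because (\ref{eq:apri}) is uniform in $t$) covers all of $[0,T]$.

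I expect the main obstacle to be the combinatorial step: the raw Duhamel iteration produces factorially many terms, and proving uniqueness requires the Klainerman--Machedon regrouping (or an equivalent ``skeleton graph'' reorganization) to reduce this to geometric growth. Verifying that the regrouping is compatible with the symmetry of the kernels and that the resulting integrand really does coincide on the reorganized time simplex is the delicate bookkeeping at the heart of the argument; the spacetime estimate for $B^{(k+1)}$ and the Duhamel manipulations themselves are then comparatively routine.
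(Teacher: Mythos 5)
Your overall architecture (Duhamel iteration, combinatorial regrouping to tame the factorial growth, term-by-term bounds, bootstrap in time) is the right one, and the Klainerman--Machedon ``board game'' regrouping you invoke serves exactly the same purpose as the paper's reorganization into Feynman graphs with $\leq 2^{4m+k}$ elements. But the central analytic estimate you propose is false, and the paper explicitly says so. You claim a \emph{pointwise-in-time} bound
\[
\bigl\| B^{(k+1)}_j\, \gamma^{(k+1)} \bigr\|_{\cH_k} \leq C\, \bigl\| \gamma^{(k+1)} \bigr\|_{\cH_{k+1}},
\]
i.e.\ that the collision operator loses only one derivative per particle. In three dimensions $\delta(x) \not\leq C(1-\Delta)$ as an operator; one would need roughly $(1-\Delta)^{3/2+\eps}$, and since the $\cH_k$ norm carries only one derivative per variable, the a-priori bound (\ref{eq:apri}) by itself does not control $B^{(k)}\gamma^{(k+1)}$. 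This is precisely the obstruction the paper flags before introducing the graph expansion: ``the a-priori estimates $\|\gamma_t^{(k)}\|_{\cH_k}\leq C^k$ are not sufficient to show that (\ref{eq:eta}) converges to zero.'' Your proposal would therefore collapse at the very step you describe as ``comparatively routine.''

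What is actually needed is the smoothing supplied by the interspersed free propagators. The paper exploits this in Fourier space: after passing to momentum--frequency variables, the propagators $\langle \a_e - p_e^2\rangle^{-1}$ supply the extra decay, an iterative integration scheme from the leaves to the roots (with the threshold exponent $2<\lambda<5/2$ in (\ref{eq:scheme})) closes the power counting, and one obtains $|K_{\Lambda,t}|\leq C^{k+m}t^{m/4}$. The genuine Klainerman--Machedon alternative replaces this by a \emph{spacetime} (Strichartz-type) bound of the schematic form $\|B^{(k+1)}\cU^{(k+1)}(\cdot)\gamma^{(k+1)}\|_{L^2_t\cH_k}\lesssim \|\gamma^{(k+1)}\|_{\cH_{k+1}}$, in which the $L^2$ average over time is essential; you use the words ``spacetime bound'' but then write down the pointwise version, which does not hold. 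Related to this, your claimed factor $t^n/n!$ is also not available: as the paper observes, after the combinatorial regrouping the apparent $1/m!$ gain from the ordered time simplex is lost, because the time integrations are consumed in smoothing out the delta singularities; one ends up with only $t^{m/4}$ and the geometric bound $C^{k+m}$ on the number of graphs, which is still enough for the short-time/bootstrap argument. So the skeleton of your argument is sound, but the heart of the proof --- converting free-evolution dispersion into control of the singular collision operator --- is missing and replaced by an inequality that fails.
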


In the next two subsections we explain the main ideas of the proofs
of Proposition \ref{prop:aprik} and Theorem~\ref{thm:unique}.

\subsection{Higher Order Energy Estimates}

The main difficulty in proving Proposition \ref{prop:aprik} is the
fact that the estimate (\ref{eq:aprik3}) does not hold true if we
replace $\gamma^{(k)}_{\infty,t}$ by the marginal density
$\gamma^{(k)}_{N,t}$. More precisely,
\begin{equation}\label{eq:aprwrong} \tr \; (1-\Delta_1) \dots
(1-\Delta_k) \gamma^{(k)}_{N,t} \leq C^k \end{equation} cannot hold
true with a constant $C$ independent of $N$. In fact, for finite $N$
and $k >1$, the $k$-particle density $\gamma_{N,t}^{(k)}$ still
contains the short scale structure due to the correlations among the
particles. Therefore, when we take derivatives of
$\gamma_{N,t}^{(k)}$ as in (\ref{eq:aprwrong}), the singular
structure (which varies on a length scale of order $1/N$) generates
contributions which diverge in the limit $N \to \infty$.

\medskip

To overcome this problem, we cutoff the wave function $\psi_{N,t}$
when two or more particles come at distances smaller than some
intermediate length scale $\ell$, with $N^{-1} \ll \ell \ll 1$ (more
precisely, the cutoff will be effective only when one or more
particles come close to one of the variable $x_j$ over which we want
to take derivatives). For fixed $j= 1, \dots ,N$, we define
$\theta_j \in C^{\infty} (\bR^{3N})$ such that
\[ \theta_j (\bx) \simeq \left\{ \begin{array}{ll} 1 & \quad
\text{if} \quad |x_i -x_j| \gg \ell \quad \text{for all } i \neq j
\\
0 & \quad \text{if there exists } i \neq j \quad \text{with} \quad
|x_i -x_j| \lesssim \ell \end{array} \right. \; . \] It is
important, for our analysis, that $\theta_j$ controls its
derivatives (in the sense that, for example, $|\nabla_i \theta_j|
\leq C \ell^{-1} \theta^{1/2}_j$); for this reason we cannot use
standard compactly supported cutoffs, but instead we have to
construct appropriate functions which decay exponentially when
particles come close together. Making use of the functions $\theta_j
(\bx)$, we prove the following higher order energy estimates.

\begin{proposition}\label{prop:highen}
Choose $\ell \ll 1$ such that $N \ell^2 \gg 1$. Suppose that
$\alpha$ is small enough. Then there exist constants $C_1$ and $C_2$
such that, for any $\psi \in L^2_s (\bR^{3N})$,
\begin{equation}\label{eq:highen}
\langle \psi, (H_N +C_1 N)^k \psi \rangle \geq C_2 N^k \int \rd \bx
\; \theta_1 (\bx) \dots \theta_{k-1} (\bx) \, |\nabla_1 \dots
\nabla_k \psi (\bx)|^2 \, .
\end{equation}
\end{proposition}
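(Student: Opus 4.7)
I would prove Proposition \ref{prop:highen} by induction on $k$, the idea being to peel off one factor of $(H_N+C_1N)$ at each step in order to generate an extra derivative $\nabla_{k+1}$ together with its cutoff $\theta_k$, and then apply the inductive hypothesis to the remaining $(H_N+C_1N)^k$.

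For the base case $k=1$ the product $\theta_1\cdots\theta_{k-1}$ is empty, so the claim reduces to $\langle\psi,(H_N+C_1N)\psi\rangle\geq C_2 N\int|\nabla_1\psi|^2\,\rd\bx$, which is immediate from the positivity of $V_N$ and the permutation symmetry of $\psi$: $\langle\psi,H_N\psi\rangle\geq\sum_j\int|\nabla_j\psi|^2\,\rd\bx=N\int|\nabla_1\psi|^2\,\rd\bx$, valid for any $C_1\geq 0$ and any $C_2\leq 1$. For the inductive step, assume the estimate at level $k$ with some constants $C_1,C_2$. Writing $H_N=\sum_j h_j$ with $h_j=-\Delta_j+\tfrac12\sum_{i\neq j}V_N(x_i-x_j)\geq 0$, the plan is to establish a quadratic-form inequality of the shape
\[
\big\langle\psi,(H_N+C_1N)^{k+1}\psi\big\rangle \;\geq\; c\,N\,\big\langle\sqrt{\theta_k}\,\nabla_{k+1}\psi,\,(H_N+C_1N)^k\,\sqrt{\theta_k}\,\nabla_{k+1}\psi\big\rangle \;-\;\cE,
\]
in close analogy with the proof of Proposition \ref{prop:enest}. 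Namely, one factor of $(H_N+C_1N)$ is used to extract the kinetic part $-\Delta_{k+1}$ of a single particle (the factor $N$ arising from symmetry via $\sum_{j=k+1}^N h_j\geq (N-k)\,h_{k+1}$ in quadratic form), after which the cross-interactions $V_N(x_i-x_{k+1})$ are absorbed by conjugating with the two-body correlation $f_N(x_i-x_{k+1})$ and using the zero-energy equation \eqref{eq:0en-N}, exactly as in the computation leading to \eqref{eq:pro2}. The cutoff $\sqrt{\theta_k}$ is then inserted so that in the remaining quadratic form $x_k$ is separated from every other particle by $\gg\ell$. Applying the induction hypothesis to $\sqrt{\theta_k}\,\nabla_{k+1}\psi$ produces the desired main term $c\,C_2\,N^{k+1}\int\theta_1\cdots\theta_k\,|\nabla_1\cdots\nabla_{k+1}\psi|^2\,\rd\bx$.

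The main obstacle is controlling the error $\cE$, which originates from three distinct sources. The \emph{cutoff commutator errors} from commuting $\sqrt{\theta_k}$ through the gradients and potentials contained in $(H_N+C_1N)^k$ carry factors $|\nabla\theta_k|\lesssim \ell^{-1}\sqrt{\theta_k}$; since each remaining factor of $H_N$ carries a power of $N$ while $\ell^{-1}\ll\sqrt{N}$ by the hypothesis $N\ell^2\gg 1$, these errors are strictly subdominant and can be absorbed into the main term or into the Garding shift $C_1N$ after enlarging $C_1$. The \emph{correlation errors} from cross-terms $V_N(x_i-x_{k+1})$ with $i\leq k$ are handled exactly as in the proof of Proposition \ref{prop:enest}: introducing the factor $f_N(x_i-x_{k+1})$ and using the bounds \eqref{eq:pro3} on $\nabla f_N,\nabla^2 f_N$ together with Hardy's inequality, they generate contributions of size $O(\alpha)$ times the main term, which are absorbed by choosing $\alpha$ sufficiently small. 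Finally, a technical point: the induction hypothesis as stated applies to $\psi\in L^2_s$, whereas $\sqrt{\theta_k}\,\nabla_{k+1}\psi$ is symmetric only in the variables $x_{k+1},\ldots,x_N$, so one should really carry through the induction with a slightly strengthened statement requiring only symmetry in the variables that have not yet been differentiated or cut off.

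The principal difficulty is managing all these errors \emph{simultaneously} through $k$ iterations, keeping the constants uniform in $N$ while allowing $C_1,C_2$ to depend on $k$. The joint assumptions $\alpha\ll 1$ and $N\ell^2\gg 1$ are precisely what is needed for the cumulative error to remain strictly subdominant with respect to the main term at every inductive step.
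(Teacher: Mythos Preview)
Your inductive strategy matches what the paper indicates: the paper does not give a proof here but simply states that ``the proof of Proposition~\ref{prop:highen} is based on induction over $k$; for details see Section~9 in \cite{ESY3}.'' Your identification of the two competing error mechanisms---the localization errors from $\nabla\theta_j$, controlled by $\ell^{-1}\ll N^{1/2}$ (i.e.\ $N\ell^2\gg1$), and the interaction errors, controlled by smallness of $\alpha$---is exactly the content of the paragraph following the statement, and your remark about needing a slightly strengthened inductive hypothesis (symmetry only in the untouched variables) is well taken.

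One point worth flagging: you lean on the $f_N$--conjugation from the proof of Proposition~\ref{prop:enest} to absorb the cross-interactions $V_N(x_i-x_{k+1})$, but this is not the mechanism the paper has in mind here. Since $V_N$ has support of radius $O(1/N)\ll\ell$, the cutoff $\theta_j$ satisfies $\theta_j\,V_N(x_i-x_j)\equiv0$ for every $i\neq j$; once the cutoffs $\theta_1,\dots,\theta_k$ are in place, all potential terms touching the differentiated variables simply vanish, and no zero-energy conjugation is needed. This is precisely the sense in which the paper says Proposition~\ref{prop:highen} is ``less precise'' than Proposition~\ref{prop:enest}: the cutoffs crudely excise the singular region rather than resolving it through $f_N$. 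The smallness of $\alpha$ still enters, but through more elementary estimates (e.g.\ controlling the single ``free'' derivative $\nabla_k$ acting on the correlation structure, which the paper notes is harmless because $\nabla f_N\sim |x|^{-1}\in L^2_{\mathrm{loc}}$). So your sketch is correct in outline, but the $f_N$ machinery can be dropped in favor of the observation that the cutoffs annihilate the potential outright.
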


The meaning of the bounds (\ref{eq:highen}) is clear. We can control
the $L^2$-norm of the $k$-th derivative $\nabla_1 \dots \nabla_k
\psi$ by the expectation of the $k$-th power of the energy per
particle, if we only integrate over configurations where the first
$k-1$ particles are ``isolated'' (in the sense that there is no
particle at distances smaller than $\ell$ from $x_1, x_2, \dots,
x_{k-1}$). In this sense the energy estimate in Proposition
\ref{prop:enest} (which, compared with Proposition
\ref{prop:highen}, is restricted to $k=2$) is more precise than
(\ref{eq:highen}), because it identifies and controls the
singularity of the wave function exactly in the region cutoff from
the integral on the right side of (\ref{eq:highen}). The point is
that, while Proposition \ref{prop:enest} is used to identify the
two-particle correlations in the marginal densities
$\gamma_{N,t}^{(k)}$ (which are essential for the emergence of the
scattering length $a_0$ in the infinite hierarchy
(\ref{eq:GPhier})), we only need Proposition \ref{prop:highen} to
establish properties of the limiting densities; this is why we can
introduce cutoffs in (\ref{eq:highen}), provided we can show their
effect to vanish in the limit $N \to \infty$.

\medskip

Note that we can allow one ``free derivative''; in (\ref{eq:highen})
we take the derivative over $x_k$ although there is no cutoff
$\theta_k (\bx)$. The reason is that the correlation structure
becomes singular, in the $L^2$-sense, only when we derive it twice
(if one uses the zero energy solution $f_N$ introduced in
(\ref{eq:0en}) to describe the correlations, this can be seen by
observing that $\nabla f_N (x) \simeq 1/ |x|$, which is locally
square integrable). Remark that the condition $N\ell^2 \gg 1$ is a
consequence of the fact that, if $\ell$ is too small, the error due
to the localization of the kinetic energy on distances of order
$\ell$ cannot be controlled. The proof of Proposition
\ref{prop:highen} is based on induction over $k$; for details see
Section 9 in \cite{ESY3}.

\medskip

{F}rom the estimates (\ref{eq:highen}), using the preservation of
the expectation of $H_N^k$ along the time evolution and a
regularization of the initial $N$-particle wave function $\psi_{N}$,
we obtain the following bounds for the solution $\psi_{N,t} =
e^{-iH_N t} \psi_N$ of the Schr\"odinger equation (\ref{eq:schr}).
\begin{equation}
\int \rd \bx \; \theta_1 (\bx) \dots \theta_{k-1} (\bx) \; \left|
\nabla_1 \dots \nabla_k \psi_{N,t} (\bx) \right|^2 \leq C^k
\end{equation}
uniformly in $N$ and $t$, and for all $k \geq 1$. Translating these
bounds in the language of the density matrix $\gamma_{N,t}$, we
obtain \begin{equation}\label{eq:cutoffs} \tr \; \theta_1 \dots
\theta_{k-1} \nabla_1 \dots \nabla_k \gamma_{N,t} \nabla_1^* \dots
\nabla_k^* \leq C^k \,. \end{equation} The idea now is to use the
freedom in the choice of the cutoff length $\ell$. If we fix the
position of all particles but $x_j$, it is clear that the cutoff
$\theta_j$ is effective in a volume at most of the order $N\ell^3$.
If we choose now $\ell$ such that $N\ell^3 \to 0$ as $N \to \infty$
(which is of course compatible with the condition that $N \ell^2 \gg
1$), then we can expect that, in the limit of large $N$, the cutoff
becomes negligible. This approach yields in fact the desired
results; starting from (\ref{eq:cutoffs}), and choosing $\ell$ such
that $N \ell^3 \ll 1$, we can complete the proof of Proposition
\ref{prop:aprik} (see Proposition 6.3 in \cite{ESY3} for more
details).

\subsection{Expansion in Feynman Graphs}

To prove Theorem \ref{thm:unique}, we start by rewriting the
infinite hierarchy (\ref{eq:GPhier}) in the integral form
\begin{equation}\label{eq:GPhier-int}
\begin{split}
\gamma_{t} &= \cU^{(k)} (t) \gamma_{0} + 8i \pi a_0 \sum_{j=1}^k
\int_0^t \rd s \; \cU^{(k)} (t-s) \, \tr_{k+1} \; \left[ \delta (x_j
-x_{k+1}) , \gamma^{(k+1)}_{s} \right] \,
\\&= \cU^{(k)} (t) \gamma_{0} +
\int_0^t \rd s \; \cU^{(k)} (t-s) \, B^{(k)} \gamma^{(k+1)}_{s} \, ,
\end{split}
\end{equation}
where $\cU^{(k)} (t)$ denotes the free evolution of $k$ particles,
\[ \cU^{(k)} (t) \gamma^{(k)} = e^{it\sum_{j=1}^k \Delta_j}
\gamma^{(k)} e^{-it\sum_{j=1}^k \Delta_j} \] and the collision
operator $B^{(k)}$ maps $(k+1)$-particle operators into $k$-particle
operators according to
\begin{equation}\label{eq:B}
B^{(k)} \gamma^{(k+1)} = 8i \pi a_0  \sum_{j=1}^k \tr_{k+1}\, \left[
\delta (x_j -x_{k+1}), \gamma^{(k+1)} \right]
\end{equation}
(recall that $\tr_{k+1}$ denotes the partial trace over the
$(k+1)$-th particle).

\medskip

Iterating (\ref{eq:GPhier-int}) $n$ times we obtain the Duhamel type
series
\begin{equation}\label{eq:Duhamel}
\gamma^{(k)}_{t} = \cU^{(k)} (t) \gamma^{(k)}_0 + \sum_{m=1}^{n-1}
\xi^{(k)}_{m,t} + \eta^{(k)}_{n,t}
\end{equation}
with
\begin{equation}\label{eq:xi}
\begin{split}
\xi^{(k)}_{m,t} &= \int_0^t \rd s_1 \dots \int_0^{s_{m-1}} \rd s_m
\; \cU^{(k)} (t-s_1) B^{(k)} \cU^{(k+1)} (s_1 -s_2) B^{(k+1)} \dots
B^{(k+m-1)} \cU^{(k+m)} (s_m) \gamma^{(k+m)}_{0} \\&= \sum_{j_1=1}^k
\sum_{j_2=1}^{k+1} \dots \sum_{j_m=1}^{k+m} \int_0^t \rd s_1 \dots
\int_0^{s_{m-1}} \rd s_m \; \cU^{(k)} (t-s_1)  \tr_{k+1} \Big[
\delta (x_{j_1} - x_{k+1}),
\\  &\hspace{.3cm}  \left. \cU^{(k+1)} (s_1 -s_2) \tr_{k+2} \left[ \delta
(x_{j_2} - x_{k+2}), \dots \tr_{k+m} \left[ \delta ( x_{j_m}-
x_{k+m}), \cU^{(k+m)} (s_m) \gamma_0^{(k+m)} \right] \dots\right]
\right]
\end{split}
\end{equation} and the error term
\begin{equation}\label{eq:eta}
\eta^{(k)}_{n,t} =\int_0^t \rd s_1 \int_0^{s_1} \rd s_2 \dots
\int_0^{s_{n-1}} \rd s_n \; \cU^{(k)} (t-s_1) B^{(k)} \cU^{(k+1)}
(s_1 -s_2) B^{(k+1)} \dots B^{(k+n-1)} \gamma^{(k+m)}_{s_n}\,.
\end{equation}
Note that the error term (\ref{eq:eta}) has exactly the same form as
the terms in (\ref{eq:xi}), with the only difference that the last
free evolution is replaced by the full evolution
$\gamma^{(k+m)}_{s_n}$.

\medskip

To prove the uniqueness of the infinite hierarchy, it is enough to
prove that the error term (\ref{eq:eta}) converges to zero as $n \to
\infty$ (in some norm, or even only after testing it against a
sufficiently large class of smooth observables). The main problem
here is that the delta function in the collision operator $B^{(k)}$
cannot be controlled by the kinetic energy (in the sense that, in
three dimensions, the operator inequality $\delta (x) \leq C(1-
\Delta)$ does not hold true). For this reason, the a-priori
estimates $\| \gamma^{(k)}_t \|_{\cH_k} \leq C^k$ are not sufficient
to show that (\ref{eq:eta}) converges to zero, as $n \to \infty$.
Instead, we have to make use of the smoothing effects of the free
evolutions $\cU^{(k+j)} (s_j -s_{j+1})$ in (\ref{eq:eta}) (in a
similar way, Stricharzt estimates are used to prove the
well-posedness of nonlinear Schr\"odinger equations). To this end,
we rewrite each term in the series (\ref{eq:Duhamel}) as a sum of
contributions associated with certain Feynman graphs, and then we
prove the convergence of the Duhamel expansion by controlling each
contribution separately.

\bigskip

The details of the diagrammatic expansion can be found in Section 9
of \cite{ESY2}. Here we only present the main ideas. We start by
considering the term $\xi_{m,t}^{(k)}$ in (\ref{eq:xi}). After
multiplying it with a compact $k$-particle observable $J^{(k)}$ and
taking the trace, we expand the result as
\begin{equation}\label{eq:expa}
\tr \; J^{(k)} \xi^{(k)}_{m,t} = \sum_{\Lambda \in \cF_{m,k}}
K_{\Lambda,t}
\end{equation}
where $K_{\Lambda,t}$ is the contribution associated with the
Feynman graph $\Lambda$. Here $\cF_{m,k}$ denotes the set of all
graphs consisting of $2k$ disjoint, paired, oriented, and rooted
trees with $m$ vertices. An example of a graph in $\cF_{m,k}$ is
drawn in Figure \ref{fig:feynman}. Each vertex has one of the two
forms drawn in Figure \ref{fig:feynman}, with one ``father''-edge on
the left (closer to the root of the tree) and three ``son''-edges on
the right. One of the son edge is marked (the one drawn on the same
level as the father edge; the other two son edges are drawn below).
Graphs in $\cF_{m,k}$ have $2k+3m$ edges, $2k$ roots (the edges on
the very left), and $2k+2m$ leaves (the edges on the very right). It
is possible to show that the number of different graphs in
$\cF_{m,k}$ is bounded by $2^{4m+k}$.
\begin{figure}
\begin{center}
\epsfig{file=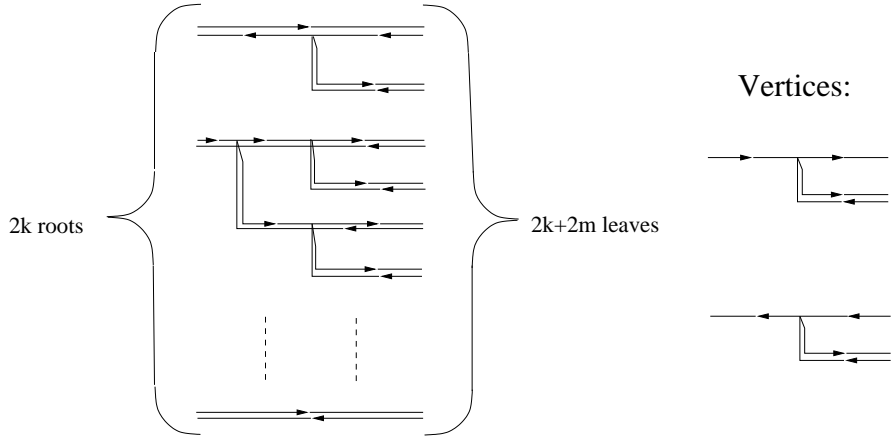,scale=.60}
\end{center}
\caption{A Feynman graph in $\cF_{m,k}$ and its two types of
vertices}\label{fig:feynman}
\end{figure}

\medskip

The particular form of the graphs in $\cF_{m,k}$ is due to the
quantum mechanical nature of the expansion; the presence of a
commutator in the collision operator (\ref{eq:B}) implies that, for
every $B^{(k+j)}$ in (\ref{eq:xi}), we can choose whether to write
the interaction on the left or on the right of the density. When we
draw the corresponding vertex in a graph in $\cF_{m,k}$, we have to
choose whether to attach it on the incoming or on the outgoing edge.

\medskip

Graphs in $\cF_{m,k}$ are characterized by a natural partial
ordering among the vertices ($v \prec v'$ if the vertex $v$ is on
the path from $v'$ to the roots); there is, however, no total
ordering. The absence of total ordering among the vertices is the
consequence of a rearrangement of the summands on the r.h.s. of
(\ref{eq:xi}); by removing the order between times associated with
non-ordered vertices we significantly reduce the number of terms in
the expansion. In fact, while (\ref{eq:xi}) contains $(m+k)!/k!$
summands, in (\ref{eq:expa}) we are only summing over $2^{4m+k}$
contributions. The price we have to pay is that the apparent gain of
a factor $1/m!$ because of the ordering of the time integrals in
(\ref{eq:xi}) is lost in the new expansion (\ref{eq:expa}). However,
since the time integrations are already needed to smooth out
singularities, and since they cannot be used simultaneously for
smoothing and for gaining a factor $1/m!$, it seems very difficult
to make use of the apparent factor $1/m!$ in (\ref{eq:xi}). In fact,
we find that the expansion (\ref{eq:expa}) is better suited for
analyzing the cumulative space-time smoothing effects of the
multiple free evolutions than (\ref{eq:xi}).

\medskip

Because of the pairing of the $2k$ trees, there is a natural pairing
between the $2k$ roots of the graph. Moreover, it is also possible
to define a natural pairing of the leaves of the graph (this is
evident in Figure \ref{fig:feynman}); two leaves $\ell_1$ and
$\ell_2$ are paired if there exists an edge $e_1$ on the path from
$\ell_1$ back to the roots, and an edge $e_2$ on the path from
$\ell_2$ to the roots, such that $e_1$ and $e_2$ are the two
unmarked son-edges of the same vertex (or, if there is no unmarked
sons in the path from $\ell_1$ and $\ell_2$ to the roots, if the two
roots connected to $\ell_1$ and $\ell_2$ are paired).

\medskip

For $\Lambda \in \cF_{m,k}$, we denote by $E(\Lambda)$,
$V(\Lambda)$, $R(\Lambda)$ and $L(\Lambda)$ the set of all edges,
vertices, roots and, respectively, leaves in the graph $\Lambda$.
For every edge $e \in E(\Lambda)$, we introduce a three-dimensional
momentum variable $p_e$ and a one-dimensional frequency variable
$\a_e$. Then, denoting by $\wh\gamma^{(k+m)}_0$ and by $\wh J^{(k)}$
the kernels of the density $\gamma^{(k+m)}_0$ and of the observable
$J^{(k)}$ in Fourier space, the contribution $K_{\Lambda,t}$ in
(\ref{eq:expa}) is given by
\begin{equation}\label{eq:contriLam}
\begin{split}
K_{\Lambda,t} = &\; \int \prod_{e \in E(\Lambda)} \frac{\rd p_e \rd
\a_e}{ \a_e - p_e^2 + i \tau_e \mu_e} \, \prod_{v \in V(\Lambda)}
\delta \left(\sum_{e \in v} \pm \a_e \right) \delta \left(\sum_{e
\in v} \pm p_e \right)  \\
&\hspace{2cm} \times  \exp \left( -it \sum_{e \in R(\Lambda)} \tau_e
(\a_e +i\tau_e \mu_e) \right) \, \wh J^{(k)} \left( \{ p_e \}_{e \in
R(\Lambda)} \right) \, \wh\gamma^{(k+m)}_0 \left( \{ p_e \}_{e \in
L(\Lambda)} \right)\,.
\end{split}
\end{equation}
Here $\tau_e = \pm 1$, according to the orientation of the edge $e$.
We observe from (\ref{eq:contriLam}) that the momenta of the roots
of $\Lambda$ are the variables of the kernel of $J^{(k)}$, while the
momenta of the leaves of $\Lambda$ are the variables of the kernel
of $\gamma^{(k+m)}_0$ (this also explain why roots and leaves of
$\Lambda$ need to be paired).

\medskip

The denominators $(\a_e - p_e^2 + i \tau_e \mu_e)^{-1}$ are called
propagators; they correspond to the free evolutions in the expansion
(\ref{eq:xi}) and they enter the expression (\ref{eq:contriLam})
through the formula
\[ e^{itp^2} = \int^{\infty}_{-\infty} \rd \a \; \frac{e^{it(\a+i\mu)}}{\a- p^2
+i\mu} \] (here and in (\ref{eq:contriLam}) the measure $\rd \a$ is
defined by $\rd \a = \rd' \a / (2\pi i)$ where $\rd'\a$ is the
Lebesgue measure on $\bR$).

\medskip

The regularization factors $\mu_e$ in (\ref{eq:contriLam}) have to
be chosen such that $\mu_{\text{father}} = \sum_{e=\text{ son}}
\mu_{e}$ at every vertex. The delta-functions in
(\ref{eq:contriLam}) express momentum and frequency conservation
(the sum over $e \in v$ denotes the sum over all edges adjacent to
the vertex $v$; here $\pm \a_e = \a_e$ if the edge points towards
the vertex, while $\pm \a_e=-\a_e$ if the edge points out of the
vertex, and analogously for $\pm p_e$).

\medskip

An analogous expansion can be obtained for the error term
$\eta^{(k)}_{n,t}$ in (\ref{eq:eta}). The problem now is to analyze
the integral (\ref{eq:contriLam}) (and the corresponding integral
for the error term). Through an appropriate choice of the
regularization factors $\mu_e$ one can extract the time dependence
of $K_{\Lambda,t}$ and show that
\begin{equation}\label{eq:bound1}
\begin{split}
|K_{\Lambda,t}| \leq C^{k+m} \; t^{m/4}  \int & \prod_{e \in
E(\Gamma)} \frac{\rd \a_e \rd p_e}{\langle \a_e - p_e^2 \rangle} \;
\prod_{v \in V(\Gamma)} \delta \left( \sum_{e \in v} \pm \a_e
\right) \delta \left( \sum_{e \in v} \pm p_e \right) \, \\ &\times
\Big|\wh J^{(k)} \left( \{ p_e \}_{e \in R(\Gamma)} \right) \Big| \;
\Big| \wh \gamma^{(k+m)}_0 \left( \{ p_e \}_{e\in L
(\Gamma)}\right)\Big|
\end{split}
\end{equation}
where we introduced the notation $\langle x \rangle =
(1+x^2)^{1/2}$.

\medskip

Because of the singularity of the interaction at zero, we may be
faced here with an ultraviolet problem; we have to show that all
integrations in (\ref{eq:bound1}) are finite in the regime of large
momenta and large frequency. Because of (\ref{eq:aprit0}), we know
that the kernel $\wh \gamma^{(k+m)}_0 (\{ p_e \}_{e \in
L(\Lambda)})$ in (\ref{eq:bound1}) provides decay in the momenta of
the leaves. {F}rom (\ref{eq:aprit0}) we have, in momentum space,
\[ \int \rd p_1 \dots \rd p_{n} \; (p_1^2+1) \dots (p_{n}^2 + 1) \,
\wh \gamma^{(n)}_0 (p_1, \dots ,p_{n}; p_1, \dots ,p_{n}) \leq C^n
\] for all $n \geq 1$. Power counting implies that
\begin{equation}\label{eq:bound2} |\wh \gamma_{0}^{(k+m)} (\{ p_e \}_{e
\in L(\Lambda)})| \lesssim \prod_{e \in L(\Lambda)} \langle p_e
\rangle^{-5/2} \, .\end{equation} Using this decay in the momenta of
the leaves and the decay of the propagators $\langle \a_e - p_e^2
\rangle^{-1}, e \in E(\Lambda)$, we can prove the finiteness of all
the momentum and frequency integrals in (\ref{eq:contriLam}).
Heuristically, this can be seen using a simple power counting
argument. Fix $\kappa \gg 1$, and cutoff all momenta $|p_e| \geq
\kappa$ and all frequencies $|\a_e| \geq \kappa^2$. Each
$p_e$-integral scales then as $\kappa^3$, and each $\a_e$-integral
scales as $\kappa^2$. Since we have $2k+3m$ edges in $\Lambda$, we
have $2k+3m$ momentum- and frequency integrations. However, because
of the $m$ delta functions (due to momentum and frequency
conservation), we effectively only have to perform $2k+2m$ momentum-
and frequency-integrations. Therefore the whole integral in
(\ref{eq:contriLam}) carries a volume factor of the order
$\kappa^{5(2k+2m)} = \kappa^{10k + 10m}$. Now, since there are
$2k+2m$ leaves in the graph $\Lambda$, the estimate
(\ref{eq:bound2}) guarantees a decay of the order $\kappa^{-5/2 (2k
+ 2m)} = \kappa^{-5k-5m}$. The $2k+3m$ propagators, on the other
hand, provide a decay of the order $\kappa^{-2(2k+3m)} =
\kappa^{-4k-6m}$. Choosing the observable $J^{(k)}$ so that $\wh
J^{(k)}$ decays sufficiently fast at infinity, we can also gain an
additional decay $\kappa^{-6k}$. Since
\[ \kappa^{10k+10m} \cdot \kappa^{-5k-5m -4k -6m -6k} =
\kappa^{-m-5k} \ll 1 \] for $\kappa \gg 1$, we can expect
(\ref{eq:contriLam}) to converge in the large momentum and large
frequency regime. Remark the importance of the decay provided by the
free evolution (through the propagators); without making use of it,
we would not be able to prove the uniqueness of the infinite
hierarchy.

\medskip

This heuristic argument is clearly far from rigorous. To obtain a
rigorous proof, we use an integration scheme dictated by the
structure of the graph $\Lambda$; we start by integrating the
momenta and the frequency of the leaves (for which (\ref{eq:bound2})
provides sufficient decay). The point here is that when we perform
the integrations over the momenta of the leaves we have to propagate
the decay to the next edges on the left. We move iteratively from
the right to the left of the graph, until we reach the roots; at
every step we integrate the frequencies and momenta of the son edges
of a fixed vertex and as a result we obtain decay in the momentum of
the father edge. When we reach the roots, we use the decay of the
kernel $\wh J^{(k)}$ to complete the integration scheme. In a
typical step, we consider a vertex as the one drawn in Figure
\ref{fig:2} and we assume to have decay in the momenta of the three
son-edges, in the form $|p_e|^{-\lambda}$, $e=u,d,w$ (for some $2<
\lambda < 5/2$). Then we integrate over the frequencies $\a_u, \a_d,
\a_w$ and the momenta $p_u, p_d, p_w$ of the son-edges and as a
result we obtain a decaying factor $|p_r|^{-\lambda}$ in the
momentum of the father edge.
\begin{figure}
\begin{center}
\epsfig{file=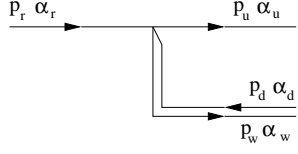,scale=.75}
\end{center}
\caption{Integration scheme: a typical vertex}\label{fig:2}
\end{figure}
In other words, we prove bounds of the form
\begin{equation}\label{eq:scheme}
\int  \frac{\rd \a_u \rd \a_d \rd \a_w \rd p_u \rd p_d \rd
p_w}{|p_u|^{\lambda} |p_d|^{\lambda} |p_w|^{\lambda}} \,
\frac{\delta (\a_r = \a_u +\a_d - \a_w)\delta (p_r = p_u +p_d
-p_w)}{\langle \a_u - p_u^2 \rangle \langle \a_d - p_d^2 \rangle
\langle \a_w -p_w^2 \rangle} \leq \frac{\const}{|p_r|^{\lambda}}\,.
\end{equation}
Power counting implies that (\ref{eq:scheme}) can only be correct if
$\lambda >2$. On the other hand, to start the integration scheme we
need $\lambda < 5/2$ (from (\ref{eq:bound2}) this is the decay in
the momenta of the leaves, obtained from the a-priori estimates). It
turns out that, choosing $\lambda = 2 + \eps$ for a sufficiently
small~$\eps>0$, (\ref{eq:scheme}) can be made precise, and the
integration scheme can be completed.

\medskip

After integrating all the frequency and momentum variables, from
(\ref{eq:bound1}) we obtain that
\[ |K_{\Lambda,t}| \leq C^{k+m} \; t^{m/4} \] for every $\Lambda \in
\cF_{m,k}$. Since the number of diagrams in $\cF_{m,k}$ is bounded
by $C^{k+m}$, it follows immediately that
\[ \left| \tr \; J^{(k)} \, \xi_{m,t}^{(k)} \right| \leq C^{k+m}
t^{m/4} \,. \] Note that, from (\ref{eq:xi}), one may expect
$\xi_{m,t}^{(k)}$ to be proportional to $t^m$. The reason why we
only get a bound proportional to $t^{m/4}$ is that we effectively
use part of the time integration to control the singularity of the
potentials.

\medskip

Note that the only property of $\gamma^{(k+m)}_0$ used in the
analysis of (\ref{eq:contriLam}) is the estimate (\ref{eq:aprit0}),
which provides the necessary decay in the momenta of the leaves.
However, since the a-priori bound (\ref{eq:apri}) hold uniformly in
time, we can use a similar argument to bound the contribution
arising from the error term $\eta^{(k)}_{n,t}$ in (\ref{eq:eta}) (as
explained above, also $\eta^{(k)}_{n,t}$ can be expanded analogously
to (\ref{eq:expa}), with contributions associated to Feynman graphs
similar to (\ref{eq:contriLam}); the difference, of course, is that
these contributions will depend on $\gamma^{(k+n)}_s$ for all $s \in
[0,t]$, while (\ref{eq:contriLam}) only depends on the initial
data). Thus, we also obtain \begin{equation}\label{eq:errorbound}
\left| \tr \; J^{(k)} \, \eta_{n,t}^{(k)} \right| \leq C^{k+n}\;
t^{n/4}\,. \end{equation} This bound immediately implies the
uniqueness. In fact, given two solutions $\Gamma_{1,t} = \{
\gamma_{1,t}^{(k)} \}_{k \geq 1}$ and $\Gamma_{2,t} = \{
\gamma_{2,t}^{(k)} \}_{ k \geq 1}$ of the infinite hierarchy
(\ref{eq:GPhier}), both satisfying the a-priori bounds
(\ref{eq:apri}) and with the same initial data, we can expand both
in a Duhamel series of order $n$ as in (\ref{eq:Duhamel}). If we fix
$k \geq 1$, and consider the difference between $\gamma^{(k)}_{1,t}$
and $\gamma^{(k)}_{2,t}$, all terms (\ref{eq:xi}) cancel out because
they only depend on the initial data. Therefore, from
(\ref{eq:errorbound}), we immediately obtain that, for arbitrary
(sufficiently smooth) compact $k$-particle operators $J^{(k)}$,
\[ \left|\tr J^{(k)} \left( \gamma_{1,t}^{(k)} - \gamma_{2,t}^{(k)}
\right)\right| \leq 2 \, C^{k+n} \; t^{n/4} \] Since it is
independent of $n$, the left side has to vanish for all $t < 1/C^4$.
This proves uniqueness for short times. But then, since the a-priori
bounds hold uniformly in time, the argument can be repeated to prove
uniqueness for all times.

\thebibliography{hh}


\bibitem{AGT} Adami, R.; Golse, F.; Teta, A.:
Rigorous derivation of the cubic NLS in dimension one. Preprint:
Univ. Texas Math. Physics Archive, www.ma.utexas.edu, No. 05-211.

\bibitem{CW} M.H. Anderson, J.R. Ensher,
M.R. Matthews, C.E. Wieman, and E.A. Cornell, {\it Science} {\bf
269} (1995), 198.

\bibitem{BGM}
Bardos, C.; Golse, F.; Mauser, N.: Weak coupling limit of the
$N$-particle Schr\"odinger equation. \textit{Methods Appl. Anal.}
\textbf{7} (2000), 275--293.

\bibitem{Kett} K. B. Davis, M. -O. Mewes, M. R. Andrews,
N. J. van Druten, D. S. Durfee, D. M. Kurn and W. Ketterle, {\it
Phys. Rev. Lett.} {\bf 75} (1995), 3969.


\bibitem{ESY2} Erd{\H{o}}s, L.; Schlein, B.; Yau, H.-T.:
Derivation of the cubic non-linear Schr\"odinger equation from
quantum dynamics of many-body systems. {\it Invent. Math.} {\bf 167}
(2007), no. 3, 515-614.

\bibitem{ESY3} Erd{\H{o}}s, L.; Schlein, B.; Yau, H.-T.:
Derivation of the Gross-Pitaevskii equation for the dynamics of
Bose-Einstein condensate. To appear in {\it Ann. of Math.} Preprint
arXiv:math-ph/0606017.

\bibitem{ESY4} Erd{\H{o}}s, L.; Schlein, B.; Yau, H.-T.:
Rigorous derivation of the Gross-Pitaevskii equation. {\it Phys.
Rev. Lett.} {\bf 98} (2007), no. 4, 040404.

\bibitem{EY} Erd{\H{o}}s, L.; Yau, H.-T.: Derivation
of the nonlinear {S}chr\"odinger equation from a many body {C}oulomb
system. \textit{Adv. Theor. Math. Phys.} \textbf{5} (2001), no. 6,
1169--1205.

\bibitem{LS} Lieb, E.H.; Seiringer, R.:
Proof of {B}ose-{E}instein condensation for dilute trapped gases.
\textit{Phys. Rev. Lett.} \textbf{88} (2002), no. 17, 170409.


\bibitem{LSY} Lieb, E.H.; Seiringer, R.; Yngvason, J.: Bosons in a trap:
a rigorous derivation of the {G}ross-{P}itaevskii energy functional.
\textit{Phys. Rev A} \textbf{61} (2000), no. 4, 043602.

\bibitem{Sp} Spohn, H.: Kinetic Equations from Hamiltonian Dynamics.
\textit{Rev. Mod. Phys.} \textbf{52} (1980), no. 3, 569--615.

\end{document}